\newcommand{\nat}{{\mathbb N}}
\begin{document}

\title{Learning Unions of $k$-Testable Languages \thanks{This research is supported by the Dutch Technology Foundation (STW) under the Robust CPS program (project 12693).}}
\author{Alexis Linard\inst{1} \and
Colin de la Higuera \inst{2} \and
Frits Vaandrager\inst{1}}
\authorrunning{A. Linard, C. de la Higuera, and F. Vaandrager}
\institute{Institute for Computing and Information Science \\ Radboud University, Nijmegen, The Netherlands \\
\email{$\{$a.linard,f.vaandrager$\}$@cs.ru.nl}
\and
Laboratoire des Sciences du Numérique de Nantes \\ Université de Nantes, France \\
\email{cdlh@univ-nantes.fr}}

\maketitle
\setcounter{footnote}{0}

\begin{abstract}
A classical problem in grammatical inference is to identify a language from a set of examples. 
In this paper, we address the problem of identifying a union of languages from examples that belong to several \emph{different} unknown languages.
Indeed, decomposing a language into smaller pieces that are easier to represent should make learning easier than aiming for a too generalized language.
In particular, we consider $k$-testable languages in the strict sense ($k$-TSS).
These are defined by a set of allowed prefixes, infixes (sub-strings) and suffixes that words in the language may contain.
We establish a Galois connection between the lattice of all languages over alphabet
$\Sigma$, and the lattice of $k$-TSS languages over $\Sigma$. We also define a
simple metric on $k$-TSS languages. The Galois connection and the metric allow us to derive an efficient algorithm to learn the union of $k$-TSS languages. 
We evaluate our algorithm on an industrial dataset and thus demonstrate the relevance of our approach. 
\keywords{grammatical inference \and $k$-testable languages \and union of languages \and Galois connection}
\end{abstract}

\bibliographystyle{splncs04}

\section{Introduction}
\label{sec:intro}
A common problem in grammatical inference is to find, i.e.\ \emph{learn}, a regular language from a set of examples of that language.
When this set is divided into positive examples (belonging to the language) and negative examples (not belonging to the language), the problem is typically solved by searching for the smallest deterministic finite automaton (DFA) that accepts the positive examples, and rejects the negative ones.
Moreover there exist algorithms which \emph{identify in the limit} a DFA, that is, they eventually learn correctly any language/automaton from such examples \cite{Gold1967}. 

We consider in this work a setting where one can observe positive examples from multiple different languages, but they are given together and it is not clear to which language each example belongs to.
For example, given the following set of strings $S = \{aa,aaa,aaaa,abab,ababab,$ $abba,abbba,abbbba\}$, learning a single automaton will be less informative than learning several DFAs  encoding respectively the languages $a^*$, $(ab)^*$ and $ab^*a$.
There is a trade-off between the number of languages and how specific each language should be. That is, covering all words through a single language may not be the desired result, but having a language for each word may also not be desired.
The problem at hand is therefore double: to cluster the examples and learn the corresponding languages.

In this paper, we focus on $k$-testable languages in the strict sense ($k$-TSS) \cite{McNaughtonPapert}.
A $k$-TSS language is determined by a finite set of substrings of length at most $k$ that are allowed to appear in the strings of the language. 
It has been proved that, unlike for regular languages, algorithms can learn $k$-TSS languages in the limit from text \cite{YokomoriKobayashi}.
Practically, this learning guarantee has been used in a wide range of applications \cite{bex2006inference,coste2016learning,rogers2011aural,tantini2010sequences}.
However, all these applications consider learning of a sole  $k$-TSS language \cite{bex2006inference}, or the training of several  $k$-TSS languages in a context of supervised learning  \cite{tantini2010sequences}.
Learning unions of $k$-TSS languages has been suggested in \cite{torres2001k}.

A first contribution of this paper is a Galois connection between the lattice of all languages over alphabet $\Sigma$ and the lattice of $k$-TSS languages over $\Sigma$.
This result provides a unifying and abstract perspective on known properties of $k$-TSS languages, but also leads to several new insights.
The Galois connection allows to give an alternative proof of the learnability in the limit of $k$-TSS languages, and suggests an algorithm for learning unions of  $k$-TSS languages.
A second contribution is the definition of a simple metric on $k$-TSS languages.
Based on this metric, we define a clustering algorithm that allows us
to efficiently learn unions of $k$-TSS languages.

Our research was initially motivated by a case study of print jobs that are submitted to large industrial printers.
These print jobs can be represented by strings of symbols, where each symbol denotes a different media type, such as a book cover or a newspaper page.
Together, this set of print jobs makes for a fairly complicated `language'.
Nevertheless, we observed that each print job can be classified as belonging to one of a fixed set of categories, such as `book' or `newspaper'.
Two print jobs that belong to the same category are typically similar, to the extent that they only differ in terms of prefixes, infixes and suffixes.
Therefore, the languages stand for the different families of print jobs.
Our goal is to uncover these $k$-TSS languages.

This paper is organized as follows. In Section \ref{sec:def} we recall preliminary definitions on $k$-TSS languages and define a Galois connection that characterizes these languages.
We then present in Section \ref{sec:method} our algorithm for learning unions of $k$-TSS languages.
Finally, we report on the results we achieved for the industrial case study in Section \ref{sec:experiments}.

\section{$k$-Testable Languages}
\label{sec:def}
The class of $k$-testable languages in the strict sense ($k$-TSS) has been introduced by McNaughton and Papert \cite{McNaughtonPapert}.
Informally, a $k$-TSS language is determined by a finite set of substrings of length at most $k$ that are allowed to appear in the strings of the language. 
This makes it possible to use as a parser a sliding window of size $k$, which rejects the strings that at some point do not comply with the conditions.
Concepts related to  $k$-TSS languages have been widely used e.g.\ in information theory, pattern recognition and DNA sequence analysis
\cite{GarciaVidal,YokomoriKobayashi}.
Several definitions of $k$-TSS languages occur in the literature, but the differences are technical.
In this section, we present a slight variation of the definition of $k$-TSS languages from \cite{cdlh}, 
which in turn is a variation of the definition occurring in \cite{GarciaVidal,GarciaVidalOncina}.
We establish a Galois connection that characterizes $k$-TSS languages, and show how this Galois connection may
be used to infer a learning algorithm.

We write $\nat$ to denote the set of natural numbers, and let $i$, $j$, $k$, $m$, and $n$ range over $\nat$.
\subsection{Strings}
Throughout this paper, we fix a finite set $\Sigma$ of \emph{symbols}. 
A \emph{string} $x = a_1 \ldots a_n$ is a finite sequence of symbols.
The \emph{length} of a string $x$, denoted $\mid x \mid$ is the number of symbols occurring in it.
The empty string is denoted $\lambda$.
We denote by $\Sigma^{\ast}$ the set of all strings over $\Sigma$, and by $\Sigma^{+}$ the set of all nonempty strings over $\Sigma$ (i.e.\ $\Sigma^{\ast} = \Sigma^{+} \cup \{ \lambda \}$).
Similarly, we denote by $\Sigma^{<i}$, $\Sigma^i$ and $\Sigma^{>i}$ the sets of strings over $\Sigma$ of length less than $i$, equal to $i$, and greater than $i$, respectively.

Given two strings $u$ and $v$, we will denote by $u \cdot v$ the concatenation of $u$ and $v$.
When the context allows it, $u \cdot v$ shall be simply written $u v$.
We say that $u$ is a \emph{prefix} of $v$ iff there exists a string $w$ such that $u w = v$.
Similarly, $u$ is a \emph{suffix} of $v$ iff there exists a string $w$ such that $w u  = v$.
We denote by $x[:k]$ the prefix of length $k$ of $x$ and $x[-k:]$ the suffix of length $k$ of $x$.

A \emph{language} is any set of strings, so therefore a subset of $\Sigma^{\ast}$.
Concatenation is lifted to languages by defining $L \cdot L' = \{ u \cdot v \mid u \in L \mbox{ and } v \in L' \}$.
Again, we will write $L L'$ instead of $L \cdot L'$ when the context allows it.

\subsection{$k$-Testable Languages}

\label{subsec:k-testable-languages}
A $k$-TSS language is determined by finite sets of strings of length $k-1$ or $k$ that are allowed as prefixes, suffixes and substrings, respectively, together with all the short strings (with length at most $k-1$) contained in the language.
The finite sets of allowed strings are listed in what McNaughton and Papert \cite{McNaughtonPapert} called a \emph{$k$-test vector}.
The following definition is taken from \cite{cdlh}, except that we have omitted the fixed alphabet $\Sigma$ as an element in the tuple,
and added a technical condition ($I \cap F = C \cap \Sigma^{k-1}$) that we need to prove Theorem~\ref{Galois}.

\begin{definition}
Let $k >0$. A \emph{$k$-test vector} is a 4-tuple $Z = \langle I, F, T, C \rangle$ where
\begin{itemize}
\item 
$I \subseteq \Sigma^{k-1}$ is a set of allowed \emph{prefixes},
\item
$F \subseteq \Sigma^{k-1}$ is a set of allowed \emph{suffixes},
\item
$T \subseteq \Sigma^k$ is a set of allowed \emph{segments}, and
\item
$C \subseteq \Sigma^{<k}$ is a set of allowed \emph{short strings} satisfying $I \cap F = C \cap \Sigma^{k-1}$.
\end{itemize}
We write ${\cal T}_k$ for the set of $k$-test vectors. 
\end{definition}

Note that the set ${\cal T}_k$ of $k$-test vectors is finite. We equip set ${\cal T}_k$ with
a partial order structure as follows.

\begin{definition}
Let $k >0$. The relation $\sqsubseteq$ on ${\cal T}_k$ is given by
\begin{eqnarray*}
\langle I, F, T, C \rangle \sqsubseteq \langle I', F', T', C' \rangle & \Leftrightarrow & I \subseteq I' \mbox{ and } F \subseteq F' \mbox{ and } T \subseteq T' \mbox{ and } C \subseteq C'.
\end{eqnarray*}
With respect to this ordering, ${\cal T}_k$ has a least element $\bot = \langle \emptyset, \emptyset, \emptyset, \emptyset \rangle$ and a greatest element $\top = \langle \Sigma^{k-1}, \Sigma^{k-1}, \Sigma^k, \Sigma^{<k} \rangle$.
The \emph{union}, \emph{intersection} and \emph{symmetric difference} of two $k$-test vectors $Z = \langle I, F, T, C\rangle$ and $Z' = \langle I', F', T', C'\rangle$ are given by, respectively,
\begin{eqnarray*}
Z \sqcup Z' & = &  \langle I \cup I', F \cup F', T \cup T', C \cup C' \cup (I \cap F') \cup (I' \cap F) \rangle,\\
Z \sqcap Z' & = &  \langle I \cap I', F \cap F', T \cap T', C \cap C' \rangle,\\
Z \bigtriangleup Z' & = &  \langle I \bigtriangleup I', F \bigtriangleup F', T \bigtriangleup T', C \bigtriangleup C' \bigtriangleup (I' \cap F) \bigtriangleup (I \cap F') \rangle.
\end{eqnarray*}
\end{definition}

The reader may check that $Z \sqcup Z'$, $Z \sqcap Z'$ and
$Z \bigtriangleup Z'$ are $k$-test vectors indeed, preserving the property
$I \cap F = C \cap \Sigma^{k-1}$. The reader may also check that
$( {\cal T}_k, \sqsubseteq)$ is a lattice with $Z \sqcup Z'$ the
least upper bound of $Z$ and $Z'$, and $Z \sqcap Z'$ the greatest lower bound of $Z$ and $Z'$.
The symmetric difference operation $\bigtriangleup$ will be used further 
on to define a metric on $k$-test vectors.

We can associate a $k$-test vector $\alpha_k(L)$ to each language $L$ by taking
all prefixes of length $k-1$ of the strings in $L$,
all suffixes of length $k-1$ of the strings in $L$ , and
all substrings of length $k$ of the strings in $L$. 
Any string which is both an allowed prefix and an allowed suffix is also a short string, 
as well as any string in $L$ with length less than $k-1$.
\begin{definition}
Let $L \subseteq \Sigma^{\ast}$ be a language and $k \in \nat$.
Then $\alpha_k (L)$ is the $k$-test vector  $\langle I_k(L), F_k(L), T_k(L), C_k(L) \rangle$ where
\begin{itemize}
\item 
$I_k(L) = \{ u \in \Sigma^{k-1} \mid \exists v \in \Sigma^{\ast} : u v \in L \}$,
\item 
$F_k(L) = \{ w \in \Sigma^{k-1} \mid \exists v \in \Sigma^{\ast} : v w \in L \}$,
\item 
$T_k(L) = \{ v \in \Sigma^k \mid \exists u, w \in \Sigma^{\ast} : u v w \in L \}$, and
\item
$C_k(L) = (L \cap \Sigma^{<k-1}) \cup (I_k(L) \cap F_k(L))$.
\end{itemize}
\end{definition}

It is easy to see that operation $\alpha_k : 2^{\Sigma^{\ast}} \rightarrow {\cal T}_k$ is monotone.
\begin{proposition}
\label{alpha monotone}
For all languages $L, L'$ and for all $k>0$,
\begin{eqnarray*}
L \subseteq L' & \Rightarrow & \alpha_k(L) \sqsubseteq \alpha_k(L').
\end{eqnarray*}
\end{proposition}

Conversely, we associate a language $\gamma_k(Z)$ to each $k$-test vector $Z= \langle I, F, T, C \rangle$, consisting
of all the short strings from
 $C$ together with all strings of length at least $k-1$ whose prefix of length $k-1$ is in $I$, whose suffix of length $k-1$ is in $F$, and
where all substrings of length $k$ belong to $T$.
\begin{definition}
Let $Z = \langle I, F, T, C \rangle$ be a $k$-test vector, for some $k>0$.
Then
\begin{eqnarray*}
\gamma_k(Z) & = & C \cup ((I \Sigma^{\ast} \cap \Sigma^{\ast} F) \setminus (\Sigma^{\ast} (\Sigma^k \setminus T)\Sigma^{\ast})).
\end{eqnarray*}
We say that a language $L$ is \emph{$k$-testable in the strict sense ($k$-TSS)} if there exists a $k$-test vector $Z$ such that $L = \gamma_k (Z)$. Note that all
$k$-TSS languages are regular.
\end{definition}

Again, it is easy to see that operation $\gamma_k : {\cal T}_k \rightarrow 2^{\Sigma^{\ast}}$ is monotone.

\begin{proposition}
\label{gamma monotone}
For all $k>0$ and for all $k$-test vectors $Z$ and $Z'$,
\begin{eqnarray*}
Z \sqsubseteq Z' & \Rightarrow & \gamma_k(Z) \subseteq \gamma_k (Z').
\end{eqnarray*}
\end{proposition}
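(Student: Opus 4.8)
The plan is a direct element-chase, exploiting that $\gamma_k$ is built by composing operations each of which is either monotone or anti-monotone, and that the anti-monotone ones occur an even number of times in the only argument (the segment set $T$) where a reversal could cause trouble. Fix $k > 0$ and $k$-test vectors $Z = \langle I, F, T, C \rangle$ and $Z' = \langle I', F', T', C' \rangle$ with $Z \sqsubseteq Z'$, so that $I \subseteq I'$, $F \subseteq F'$, $T \subseteq T'$ and $C \subseteq C'$. I take an arbitrary $x \in \gamma_k(Z)$ and show $x \in \gamma_k(Z')$.

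First, if $x \in C$ then $x \in C' \subseteq \gamma_k(Z')$ and we are done. Otherwise $x$ lies in $(I\Sigma^{\ast} \cap \Sigma^{\ast} F) \setminus (\Sigma^{\ast}(\Sigma^k \setminus T)\Sigma^{\ast})$, and I verify the three defining conditions with respect to $Z'$. From $I \subseteq I'$ we get $I\Sigma^{\ast} \subseteq I'\Sigma^{\ast}$, hence $x \in I'\Sigma^{\ast}$; symmetrically $F \subseteq F'$ gives $x \in \Sigma^{\ast} F'$. For the forbidden-segment clause, $T \subseteq T'$ yields $\Sigma^k \setminus T' \subseteq \Sigma^k \setminus T$, therefore $\Sigma^{\ast}(\Sigma^k \setminus T')\Sigma^{\ast} \subseteq \Sigma^{\ast}(\Sigma^k \setminus T)\Sigma^{\ast}$; since $x$ avoids the larger set it also avoids the smaller one, i.e. $x \notin \Sigma^{\ast}(\Sigma^k \setminus T')\Sigma^{\ast}$. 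Combining the three facts, $x \in (I'\Sigma^{\ast} \cap \Sigma^{\ast} F') \setminus (\Sigma^{\ast}(\Sigma^k \setminus T')\Sigma^{\ast}) \subseteq \gamma_k(Z')$.

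The only point that needs any care is the segment clause: it is a set difference, which is contravariant in the subtracted set, and that subtracted set is itself formed from the complement $\Sigma^k \setminus T$, which is contravariant in $T$; the two reversals cancel, so the clause is monotone in $T$ as required. Everything else is immediate from monotonicity of union, intersection, and left/right concatenation by a fixed language, so no real obstacle arises. This proposition is the exact counterpart of Proposition~\ref{alpha monotone}, and together the two monotonicity statements are what will be needed to set up the Galois connection of Theorem~\ref{Galois}.
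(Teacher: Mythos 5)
Your proof is correct, and it is exactly the intended argument: the paper omits the proof as ``easy to see,'' and your direct element-chase (noting in particular that the double contravariance in the segment clause cancels) is the standard verification the authors have in mind.
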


The next theorem, which is our main result about $k$-testable languages, asserts that $\alpha_k$ and $\gamma_k$ form a (monotone) Galois connection \cite{NNH99} between lattices $({\cal T}_k, \sqsubseteq)$ and $(2^{\Sigma^{\ast}}, \subseteq)$.

\begin{theorem}[Galois connection]
\label{Galois}
Let $k>0$, let $L \subseteq \Sigma^{\ast}$ be a language, and let $Z$ be a $k$-test vector. Then
$\alpha_k (L) \sqsubseteq Z ~ \Leftrightarrow ~ L \subseteq \gamma_k (Z)$.
\end{theorem}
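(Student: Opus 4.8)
The plan is to prove the two implications of the biconditional separately, in each case unfolding the definitions of $\alpha_k$ and $\gamma_k$ and reasoning about prefixes, suffixes and $k$-segments of strings. Throughout, the key structural fact is that for a string $x$ with $|x| \geq k-1$, membership of $x$ in $\gamma_k(Z)$ is equivalent to the conjunction of three ``local'' conditions: $x[:k-1] \in I$, $x[-k:] \in F$, and every length-$k$ infix of $x$ lies in $T$; while for $|x| \leq k-2$, membership is simply $x \in C$. I would record this reformulation of $\gamma_k(Z)$ as a preliminary observation, since both directions lean on it.

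For the direction $\alpha_k(L) \sqsubseteq Z \Rightarrow L \subseteq \gamma_k(Z)$, I would take an arbitrary $x \in L$ and show $x \in \gamma_k(Z)$, splitting on $|x|$. If $|x| \leq k-2$, then $x \in L \cap \Sigma^{<k-1} \subseteq C_k(L) \subseteq C$, so $x \in \gamma_k(Z)$. If $|x| = k-1$, then $x \in I_k(L) \cap F_k(L) \subseteq I \cap F$, and by the defining condition $I \cap F = C \cap \Sigma^{k-1}$ of a $k$-test vector we get $x \in C$, hence $x \in \gamma_k(Z)$; this is precisely the place where the extra technical condition in the definition is needed, and I would flag it. If $|x| \geq k-1$ — covering the remaining cases — then $x[:k-1] \in I_k(L) \subseteq I$, $x[-(k-1):] \in F_k(L) \subseteq F$, and each length-$k$ infix of $x$ lies in $T_k(L) \subseteq T$, so by the preliminary reformulation $x \in \gamma_k(Z)$.

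For the converse $L \subseteq \gamma_k(Z) \Rightarrow \alpha_k(L) \sqsubseteq Z$, I would verify the four inclusions $I_k(L) \subseteq I$, $F_k(L) \subseteq F$, $T_k(L) \subseteq T$, $C_k(L) \subseteq C$ one by one. For $I_k(L) \subseteq I$: given $u \in I_k(L)$ there is $v$ with $uv \in L \subseteq \gamma_k(Z)$; since $|uv| \geq k-1$ (as $|u| = k-1$), the reformulation forces $(uv)[:k-1] = u \in I$ — unless $|uv| = k-1$, i.e. $v = \lambda$, in which case $uv \in C \cap \Sigma^{k-1} = I \cap F \subseteq I$. The arguments for $F_k(L)$ and $T_k(L)$ are symmetric/analogous. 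For $C_k(L) = (L \cap \Sigma^{<k-1}) \cup (I_k(L) \cap F_k(L)) \subseteq C$: the first part follows since strings of $L$ of length $< k-1$ are short strings of $\gamma_k(Z)$, hence in $C$; the second part follows from the inclusions $I_k(L) \subseteq I$ and $F_k(L) \subseteq F$ just established, together with $I \cap F = C \cap \Sigma^{k-1} \subseteq C$.

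The main obstacle is the careful bookkeeping around length-$k-1$ strings, where the ``short string'' component $C$ and the prefix/suffix components $I, F$ overlap; this is exactly why the definition of $k$-test vector carries the side condition $I \cap F = C \cap \Sigma^{k-1}$ and why $C_k(L)$ is defined to absorb $I_k(L) \cap F_k(L)$. Getting the boundary case $|x| = k-1$ right in both directions — and making sure the reformulation of $\gamma_k(Z)$ handles it consistently with the $C$-component — is the crux; the rest is routine unfolding of definitions.
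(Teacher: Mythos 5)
Your proposal is correct and follows essentially the same route as the paper's proof: a case split on string length (with the $|x|=k-1$ boundary case resolved via the side condition $I \cap F = C \cap \Sigma^{k-1}$) for the forward direction, and a componentwise verification of the four inclusions, splitting on $v = \lambda$, for the converse. The only cosmetic difference is that you package the unfolding of $\gamma_k(Z)$ into ``local'' prefix/suffix/segment conditions as a preliminary observation, which the paper does inline.
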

\begin{proof}
Let $Z = \langle I, T, F, C \rangle$.

\paragraph{$\Rightarrow$.}
Assume $\alpha_k (L) \sqsubseteq Z$ and $w \in L$. We need to show that $w \in \gamma_k (Z)$. Since $\alpha_k (L) \sqsubseteq Z$ we know
that $I_k(L) \subseteq I$, $F_k(L) \subseteq F$, $T_k(L) \subseteq T$ and $C_k(L) \subseteq C$.
If $\mid w \mid < k-1$ then $w \in C_k(L) \subseteq C \subseteq \gamma_k (Z)$, and we are done. 
If $\mid w \mid = k-1$ then $w \in I_k(L)$ and $w \in F_k(L)$. This implies that $w \in C_k(L)$.
Now we use again that $C_k(L) \subseteq C \subseteq \gamma_k (Z)$, and we are done.
If $\mid w \mid \geq k$,
then $I_k(L) \subseteq I$ implies that the prefix of length $k-1$ of $w$ is in $I$,
$F_k(L) \subseteq F$ implies that the suffix of length $k-1$ of $w$ is in $F$, and
$T_k(L) \subseteq T$ implies that any substring of length $k$ of $w$ is in $T$.
Thus $w \in (I \Sigma^{\ast} \cap \Sigma^{\ast} F) \setminus (\Sigma^{\ast} (\Sigma^k \setminus T)\Sigma^{\ast})$ and
this implies $w \in \gamma_k (Z)$, as required.

\paragraph{$\Leftarrow$.}
Assume $L \subseteq \gamma_k (Z)$. We need to show that $\alpha_k (L) \sqsubseteq Z$.
This is equivalent to showing
that $I_k(L) \subseteq I$, $F_k(L) \subseteq F$, $T_k(L) \subseteq T$ and $C_k(L) \subseteq C$.
\begin{itemize}
\item 
$I_k(L) \subseteq I$. Suppose $u \in I_k(L)$. Then $\mid u \mid = k-1$ and there exists a string $v$ such that $uv \in L$.
By the assumption, $uv \in \gamma_k (Z)$. We consider two cases: $v = \lambda$ and $v \neq \lambda$.
If $v = \lambda$ then $u \in \gamma_k (Z)$. Since $\mid u \mid = k-1$, this means that $u \in C \cup (I \cap F)$.
Since $Z$ is a $k$-test vector, $I \cap F = C \cap \Sigma^{k-1}$. We can now infer $u \in I \cap F$.
Thus, in particular, $u \in I$, as required.
If $v \neq \lambda$ then, using $uv \in \gamma_k (Z)$, we conclude that $uv \in I \Sigma^{\ast}$.
This implies $u \in I$, as required.
\item
$F_k(L) \subseteq F$. Suppose $w \in F_k(L)$. Then $\mid w \mid = k-1$ and there exists a string $v$ such that $vw \in L$.
By the assumption, $vw \in \gamma_k (Z)$. We consider two cases: $v = \lambda$ and $v \neq \lambda$.
If $v = \lambda$ then $w \in \gamma_k (Z)$. Since $\mid w \mid = k-1$, this means that $w \in C \cup (I \cap F)$.
Since $I \cap F = C \cap \Sigma^{k-1}$, we conclude $w \in I \cap F$.
Thus, in particular, $w \in F$, as required.
If $v \neq \lambda$ then, using $vw \in \gamma_k (Z)$, we conclude that $vw \in \Sigma^{\ast} F$.
This implies $w \in F$, as required.
\item
$T_k(L) \subseteq T$. Suppose $v \in T_k(L)$. Then $\mid v \mid = k$ and there exist strings $u$ and $w$ such that $u v w \in L$.
By the assumption, $u v w \in \gamma_k (Z)$.
As the length of $u v w$ is at least $k$, 
$u v w \in (I \Sigma^{\ast} \cap \Sigma^{\ast} F) \setminus (\Sigma^{\ast} (\Sigma^k \setminus T)\Sigma^{\ast})$.
Thus $u v w \not\in (\Sigma^{\ast} (\Sigma^k \setminus T)\Sigma^{\ast})$.
Therefore, $u v w$ does not contain a substring of length $k$ that is not in $T$.
But this implies $v \in T$, as required.
\item
$C_k(L) \subseteq C$. Suppose $w \in C_k(L)$. Then $w \in L \cap \Sigma^{<k-1}$ or $w \in I_k(L) \cap F_k(L)$.
If $w \in L \cap \Sigma^{<k-1}$ then by the assumption $w \in \gamma_k (Z)$, and, as the length of $w$ is less than $k-1$, we conclude $w \in C$, as required.
If $w \in I_k(L) \cap F_k(L)$ then $w \in I_k(L)$ and $w \in F_k(L)$. By the previous items it then follows that $w \in I$ and $w \in F$,
and therefore $w \in I \cap F$. Now we use $I \cap F = C \cap \Sigma^{k-1}$ to conclude $w \in C$.
\end{itemize}
\end{proof}

The above theorem generalizes results on strictly $k$-testable languages
from \cite{GarciaVidal,YokomoriKobayashi}. 
Composition $\gamma_k \circ \alpha_k$ is commonly called the \emph{associated closure operator}, and
composition $\alpha_k \circ \gamma_k$ is known as the \emph{associated kernel operator}. The fact that we have a Galois connection
has some well-known consequences for these associated operators.

\begin{corollary}
For all $k>0$, $\gamma_k \circ \alpha_k$ and $\alpha_k \circ \gamma_k$ are monotone and idempotent.
\end{corollary}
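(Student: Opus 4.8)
The plan is to rely only on the abstract machinery of a monotone Galois connection, instantiating the specific maps $\alpha_k$ and $\gamma_k$ and invoking Theorem~\ref{Galois} at the two places where the connection itself is needed. Monotonicity is immediate and requires nothing new: $\alpha_k$ is monotone by Proposition~\ref{alpha monotone}, $\gamma_k$ is monotone by Proposition~\ref{gamma monotone}, and a composition of monotone maps between ordered sets is monotone, so both $\gamma_k \circ \alpha_k$ and $\alpha_k \circ \gamma_k$ are monotone.

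For idempotence I would first extract the two ``unit/counit'' inequalities. Taking $Z = \alpha_k(L)$ in Theorem~\ref{Galois} and using the reflexive instance $\alpha_k(L) \sqsubseteq \alpha_k(L)$ yields $L \subseteq \gamma_k(\alpha_k(L))$ for every language $L$, i.e.\ the associated closure operator is extensive; dually, taking $L = \gamma_k(Z)$ and using $\gamma_k(Z) \subseteq \gamma_k(Z)$ yields $\alpha_k(\gamma_k(Z)) \sqsubseteq Z$ for every $k$-test vector $Z$, i.e.\ the associated kernel operator is contractive.

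Next I would combine these with monotonicity to obtain the two ``triangle'' identities $\alpha_k \circ \gamma_k \circ \alpha_k = \alpha_k$ and $\gamma_k \circ \alpha_k \circ \gamma_k = \gamma_k$. For the first, applying the monotone map $\alpha_k$ to $L \subseteq \gamma_k(\alpha_k(L))$ gives $\alpha_k(L) \sqsubseteq \alpha_k(\gamma_k(\alpha_k(L)))$, while the contraction inequality with $Z = \alpha_k(L)$ gives $\alpha_k(\gamma_k(\alpha_k(L))) \sqsubseteq \alpha_k(L)$; antisymmetry of $\sqsubseteq$ on ${\cal T}_k$ then forces equality. The identity $\gamma_k \circ \alpha_k \circ \gamma_k = \gamma_k$ is proved symmetrically, using extensivity, monotonicity of $\gamma_k$, and antisymmetry of $\subseteq$. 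Idempotence is then a one-line calculation: $(\gamma_k \circ \alpha_k) \circ (\gamma_k \circ \alpha_k) = \gamma_k \circ (\alpha_k \circ \gamma_k \circ \alpha_k) = \gamma_k \circ \alpha_k$, and likewise $(\alpha_k \circ \gamma_k) \circ (\alpha_k \circ \gamma_k) = \alpha_k \circ (\gamma_k \circ \alpha_k \circ \gamma_k) = \alpha_k \circ \gamma_k$.

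I do not expect a genuine obstacle: every step is a textbook consequence of having a Galois connection between two posets, and the only points needing a little care are invoking reflexivity of the two orders to peel the extensive/contractive inequalities out of the biconditional in Theorem~\ref{Galois}, and invoking antisymmetry of $\sqsubseteq$ and $\subseteq$ to upgrade the triangle inequalities to equalities. An even shorter route is simply to cite \cite{NNH99} for the fact that the associated closure and kernel operators of any Galois connection are monotone and idempotent, but spelling out the few lines above keeps the corollary self-contained.
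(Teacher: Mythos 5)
Your proof is correct and is exactly the standard Galois-connection argument the paper implicitly relies on: the paper states this corollary without proof as a ``well-known consequence'' of Theorem~\ref{Galois} (citing \cite{NNH99}), and your derivation --- monotonicity by composition, the unit/counit inequalities from reflexivity, the triangle identities via antisymmetry, and then idempotence --- is precisely the textbook route being invoked. No issues.
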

Monotony of $\gamma_k \circ \alpha_k$ was established previously as Theorem 3.2 in \cite{GarciaVidal} and as Lemma 3.3 in \cite{YokomoriKobayashi}.

\begin{corollary} 
\label{flationary}
For all $k>0$, $L \subseteq \Sigma^{\ast}$ and $Z \in {\cal T}_k$,
\begin{eqnarray}
\label{prop1}
\alpha_k \circ \gamma_k (Z) & \sqsubseteq & Z\\
\label{prop2}
L & \subseteq & \gamma_k \circ \alpha_k(L)
\end{eqnarray}
\end{corollary}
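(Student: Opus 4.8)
The plan is to obtain both inequalities as immediate instances of the Galois connection of Theorem~\ref{Galois}, relying only on reflexivity of the two orderings involved. This is the familiar fact that in a monotone Galois connection the associated closure operator is inflationary and the associated kernel operator is deflationary; here we merely specialize it to $\alpha_k$ and $\gamma_k$.

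For~\eqref{prop1}, I would instantiate Theorem~\ref{Galois} with the language $L := \gamma_k(Z)$, which is a legitimate argument for $\alpha_k$ since $\alpha_k$ is defined on all of $2^{\Sigma^{\ast}}$. This yields the equivalence $\alpha_k(\gamma_k(Z)) \sqsubseteq Z \Leftrightarrow \gamma_k(Z) \subseteq \gamma_k(Z)$. The right-hand side holds by reflexivity of $\subseteq$, hence so does the left-hand side, which is exactly~\eqref{prop1}. For~\eqref{prop2}, I would instantiate Theorem~\ref{Galois} instead with the $k$-test vector $Z := \alpha_k(L) \in {\cal T}_k$, giving $\alpha_k(L) \sqsubseteq \alpha_k(L) \Leftrightarrow L \subseteq \gamma_k(\alpha_k(L))$. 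The left-hand side holds by reflexivity of $\sqsubseteq$ on ${\cal T}_k$, so the right-hand side holds, which is exactly~\eqref{prop2}.

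There is essentially no obstacle here: all the real work is already packaged in Theorem~\ref{Galois}, and the corollary follows simply by choosing the correct instantiation and invoking reflexivity of the order. The only points that deserve a moment's care are that the arguments $\gamma_k(Z)$ and $\alpha_k(L)$ lie in the appropriate domains of $\alpha_k$ and $\gamma_k$ respectively, which is immediate from their definitions. (Alternatively, one could prove these two properties directly by unfolding the definitions of $\alpha_k$ and $\gamma_k$, but that would essentially re-run the argument of Theorem~\ref{Galois} and is strictly more work.)
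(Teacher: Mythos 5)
Your proof is correct and is exactly the standard derivation the paper intends: the corollary is stated as an immediate consequence of Theorem~\ref{Galois}, obtained by instantiating the equivalence at $L := \gamma_k(Z)$ and at $Z := \alpha_k(L)$ and invoking reflexivity of the respective orders. Nothing further is needed.
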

Inequality (\ref{prop1}) asserts that the associated kernel operator $\alpha_k \circ \gamma_k$ is \emph{deflationary}, while
inequality (\ref{prop2}) says that the associated closure operator $\gamma_k \circ \alpha_k$ is \emph{inflationary} (or \emph{extensive}). 
Inequality (\ref{prop2}) was established previously as Lemma 3.1 in \cite{GarciaVidal} and (also) as Lemma 3.1 in \cite{YokomoriKobayashi}.

Another immediate corollary of the Galois connection is that in fact $\gamma_k \circ \alpha_k(L)$ is the smallest $k$-TSS language that contains $L$.
This has been established previously as Theorem 3.1 in \cite{GarciaVidal}.

\begin{corollary}
\label{smallestlanguage}
For all $k>0$, $L \subseteq \Sigma^{\ast}$, and $Z \in {\cal T}_k$,
\begin{eqnarray*}
L \subseteq \gamma_k(Z) & \Rightarrow & \gamma_k \circ \alpha_k (L) \subseteq \gamma_k(Z).
\end{eqnarray*}
\end{corollary}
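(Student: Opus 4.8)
The plan is to derive the statement directly from the Galois connection (Theorem~\ref{Galois}) together with the monotonicity of $\gamma_k$ (Proposition~\ref{gamma monotone}); no new combinatorial argument about prefixes, infixes or suffixes is needed, since all of that work is already packaged in Theorem~\ref{Galois}. Concretely, I would start by assuming the hypothesis $L \subseteq \gamma_k(Z)$. Applying the ``$\Leftarrow$'' direction of Theorem~\ref{Galois} with this $L$ and this $Z$ immediately yields $\alpha_k(L) \sqsubseteq Z$ in the lattice $({\cal T}_k, \sqsubseteq)$.

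Next I would apply the monotone map $\gamma_k$ to both sides of $\alpha_k(L) \sqsubseteq Z$. By Proposition~\ref{gamma monotone} this gives $\gamma_k(\alpha_k(L)) \subseteq \gamma_k(Z)$, i.e.\ $\gamma_k \circ \alpha_k(L) \subseteq \gamma_k(Z)$, which is exactly the desired conclusion. This closes the argument.

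As an alternative route (useful if one prefers to avoid re-invoking Theorem~\ref{Galois} and instead lean on its corollaries), I would argue as follows: from $L \subseteq \gamma_k(Z)$, monotony of the closure operator $\gamma_k \circ \alpha_k$ gives $\gamma_k \circ \alpha_k(L) \subseteq \gamma_k \circ \alpha_k \circ \gamma_k(Z)$; then the deflationary inequality~(\ref{prop1}) of Corollary~\ref{flationary}, namely $\alpha_k \circ \gamma_k(Z) \sqsubseteq Z$, combined once more with monotony of $\gamma_k$, yields $\gamma_k \circ \alpha_k \circ \gamma_k(Z) \subseteq \gamma_k(Z)$; chaining the two inclusions by transitivity of $\subseteq$ produces $\gamma_k \circ \alpha_k(L) \subseteq \gamma_k(Z)$ again.

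I do not anticipate a genuine obstacle here: this is the standard fact that one side of a Galois connection composed with the other is the best approximation from the appropriate side, and the only thing to be careful about is citing the correct direction of Theorem~\ref{Galois} and the monotonicity of $\gamma_k$ rather than of $\alpha_k$. The one modelling point worth a sentence in the write-up is that ``smallest'' in the informal statement preceding the corollary is justified because $\gamma_k \circ \alpha_k(L)$ is itself a $k$-TSS language (being of the form $\gamma_k(\cdot)$) that, by Corollary~\ref{flationary}~(\ref{prop2}), contains $L$; the corollary then says any other $k$-TSS language $\gamma_k(Z)$ containing $L$ contains it.
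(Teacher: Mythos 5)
Your first argument is exactly the intended one: the paper presents this as an immediate corollary of the Galois connection, obtained by applying the ``$\Leftarrow$'' direction of Theorem~\ref{Galois} to get $\alpha_k(L) \sqsubseteq Z$ and then monotonicity of $\gamma_k$ (Proposition~\ref{gamma monotone}). The proposal is correct and matches the paper's (implicit) proof; the alternative route via Corollary~\ref{flationary} is a valid bonus but not needed.
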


As a final corollary, we mention that $\alpha_k \circ \gamma_k (Z)$ is the smallest $k$-test vector that denotes the same language as $Z$.
This is essentially Lemma 1 of \cite{YokomoriKobayashi}.

\begin{corollary}
For all $k>0$ and $Z \in {\cal T}_k$,
$\gamma_k \circ \alpha_k \circ \gamma_k (Z) = \gamma_k(Z)$. Moreover, for any $Z' \in {\cal T}_k$,
\begin{eqnarray*}
\gamma_k(Z) = \gamma_k(Z') & \Rightarrow & \alpha_k \circ \gamma_k (Z) \sqsubseteq Z'.
\end{eqnarray*}
\end{corollary}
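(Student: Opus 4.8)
The plan is to derive both statements directly from the Galois connection (Theorem~\ref{Galois}) and its corollaries, since this corollary is nothing but the instantiation of the standard adjoint identity ``$\gamma \circ \alpha \circ \gamma = \gamma$'' together with a minimality observation; all the real work has already been done in establishing Theorem~\ref{Galois} and Corollary~\ref{flationary}.

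First I would prove $\gamma_k \circ \alpha_k \circ \gamma_k (Z) = \gamma_k(Z)$ by two inclusions. For ``$\subseteq$'': inequality~(\ref{prop1}) says the associated kernel operator is deflationary, i.e.\ $\alpha_k \circ \gamma_k (Z) \sqsubseteq Z$; applying the monotone map $\gamma_k$ (Proposition~\ref{gamma monotone}) gives $\gamma_k \circ \alpha_k \circ \gamma_k (Z) \subseteq \gamma_k(Z)$. For ``$\supseteq$'': instantiate inequality~(\ref{prop2}) with the language $L = \gamma_k(Z)$ to obtain $\gamma_k(Z) \subseteq \gamma_k \circ \alpha_k (\gamma_k(Z))$, which is exactly the reverse inclusion. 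Combining the two yields the equality.

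For the ``moreover'' part, suppose $\gamma_k(Z) = \gamma_k(Z')$. By reflexivity $\gamma_k(Z') \subseteq \gamma_k(Z')$, so the $\Leftarrow$ direction of Theorem~\ref{Galois}, applied with the language $\gamma_k(Z')$ and the $k$-test vector $Z'$, gives $\alpha_k \circ \gamma_k(Z') \sqsubseteq Z'$ (this is again inequality~(\ref{prop1})). Since $\gamma_k(Z) = \gamma_k(Z')$ implies $\alpha_k \circ \gamma_k(Z) = \alpha_k \circ \gamma_k(Z')$, we conclude $\alpha_k \circ \gamma_k(Z) \sqsubseteq Z'$, as required.

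I expect no genuine obstacle: each step is a one-line consequence of a result proved earlier in the section, and the only point demanding care is bookkeeping the orientation of the adjunction --- namely that $\alpha_k$ is the lower adjoint, so that $\alpha_k \circ \gamma_k$ is the deflationary kernel operator appearing in~(\ref{prop1}) while $\gamma_k \circ \alpha_k$ is the inflationary closure operator appearing in~(\ref{prop2}). Finally, the informal reading ``$\alpha_k \circ \gamma_k (Z)$ is the smallest $k$-test vector denoting the same language as $Z$'' falls out of the two parts together: the first equation shows $\gamma_k(\alpha_k \circ \gamma_k (Z)) = \gamma_k(Z)$, so $\alpha_k \circ \gamma_k (Z)$ does denote $\gamma_k(Z)$, and the ``moreover'' part shows it lies $\sqsubseteq$ every $Z'$ with that same denotation.
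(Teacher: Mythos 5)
Your proof is correct and is exactly the standard derivation the paper intends: the paper states this as an unproved corollary of the Galois connection, and your two-inclusion argument (deflationarity of $\alpha_k \circ \gamma_k$ plus monotonicity of $\gamma_k$ for one direction, inequality~(\ref{prop2}) instantiated at $L = \gamma_k(Z)$ for the other) together with the application of Theorem~\ref{Galois} to $L = \gamma_k(Z')$ for the ``moreover'' part is precisely the one-line adjoint reasoning being invoked. No gaps.
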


We can provide a simple characterization of $\alpha_k \circ \gamma_k (Z)$
as the $k$-test vector obtained by removing all the allowed prefixes, suffixes
and segments that do not occur in the $k$-testable language generated by $Z$.

\begin{definition}
Let $Z = \langle I, F, T, C \rangle$ be a $k$-test vector, for some $k>0$.
We say that $u \in I$ is a \emph{junk prefix} of $Z$ if $u$ does not occur as a
prefix of any string in $\gamma_k(Z)$.
Similarly, we say that $u \in F$ is a \emph{junk suffix} of $Z$ if $u$ does not occur as a
suffix of any string in $\gamma_k(Z)$, and we say that $u \in T$ is a \emph{junk segment} of $Z$ if $u$ does not occur as a
substring of any string in $\gamma_k(Z)$.
We call $Z$ \emph{canonical} if it does not contain any junk prefixes,
junk suffixes, or junk segments.
\end{definition}

\begin{proposition}
Let $Z$ be a $k$-test vector, for some $k>0$, and let $Z'$ be the canonical $k$-test vector obtained from $Z$ by deleting all junk prefixes, junk suffixes, and junk segments. Then $\alpha_k \circ \gamma_k (Z) = Z'$.
\end{proposition}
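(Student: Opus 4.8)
The plan is to make both sides explicit and compare them component by component. Write $Z = \langle I, F, T, C \rangle$ and put $L = \gamma_k(Z)$, so that $\alpha_k \circ \gamma_k(Z) = \langle I_k(L), F_k(L), T_k(L), C_k(L) \rangle$ by definition of $\alpha_k$. First I would unfold the description of $Z'$. A prefix $u \in I$ is junk exactly when no string of $L$ has $u$ as a prefix, i.e.\ exactly when $u \notin I_k(L)$ (recall $|u| = k-1$); hence the surviving prefixes form the set $I \cap I_k(L)$. Symmetrically the surviving suffixes form $F \cap F_k(L)$ and the surviving segments form $T \cap T_k(L)$, while the short-string component of $Z'$ is still $C$, since the deletion step touches only prefixes, suffixes and segments. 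I would insert a short remark that this $Z'$ is indeed a legal $k$-test vector: every $u \in I \cap F$ lies in $C \cap \Sigma^{k-1} \subseteq C \subseteq \gamma_k(Z) = L$, so $u$ is both a length-$(k-1)$ prefix and a length-$(k-1)$ suffix of a string of $L$; thus $I \cap F \subseteq (I \cap I_k(L)) \cap (F \cap F_k(L))$, and the reverse inclusion is immediate, so the new prefix- and suffix-components still meet in $I \cap F = C \cap \Sigma^{k-1}$, which is the defining constraint.

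The substantive content is then almost entirely handed to us by Corollary~\ref{flationary}. Inequality~(\ref{prop1}) states $\alpha_k \circ \gamma_k(Z) \sqsubseteq Z$, i.e.\ $I_k(L) \subseteq I$, $F_k(L) \subseteq F$, $T_k(L) \subseteq T$ and $C_k(L) \subseteq C$. The first three inclusions give at once $I_k(L) = I \cap I_k(L)$, $F_k(L) = F \cap F_k(L)$ and $T_k(L) = T \cap T_k(L)$, which are exactly the first three components of $Z'$. For the last component it remains only to prove $C \subseteq C_k(L)$, the one inclusion that is not purely formal. I would argue directly: take $w \in C$, so $|w| \leq k-1$ and $w \in C \subseteq \gamma_k(Z) = L$. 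If $|w| < k-1$ then $w \in L \cap \Sigma^{<k-1} \subseteq C_k(L)$. If $|w| = k-1$ then, since $Z$ is a $k$-test vector, $w \in C \cap \Sigma^{k-1} = I \cap F$, and because $w \in L$ it is both a length-$(k-1)$ prefix and a length-$(k-1)$ suffix of a string of $L$, so $w \in I_k(L) \cap F_k(L) \subseteq C_k(L)$. Hence $C = C_k(L)$, and all four components of $\alpha_k \circ \gamma_k(Z)$ and $Z'$ coincide.

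I do not expect a real obstacle here. The only points needing a little care are the bookkeeping that the vector obtained by deleting junk still satisfies $I \cap F = C \cap \Sigma^{k-1}$ (handled by the remark above), and the small direct verification of $C \subseteq C_k(L)$; everything else follows formally from the deflationary property of the associated kernel operator in Corollary~\ref{flationary}.
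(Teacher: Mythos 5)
Your proof is correct. The paper states this proposition without giving a proof, so there is nothing to compare against; your argument — unfolding $Z'$ as $\langle I \cap I_k(L),\, F \cap F_k(L),\, T \cap T_k(L),\, C\rangle$, collapsing the first three components via the deflationary inequality $\alpha_k \circ \gamma_k(Z) \sqsubseteq Z$ of Corollary~\ref{flationary}, and then verifying $C = C_k(L)$ directly by the case split on $|w| < k-1$ versus $|w| = k-1$ using the constraint $I \cap F = C \cap \Sigma^{k-1}$ — is exactly the natural route and is complete. The side remark that the junk-free vector still satisfies the defining constraint is a worthwhile addition that the paper itself glosses over.
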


\subsection{Learning $k$-TSS Languages}

It is well-known that any $k$-TSS language can be identified in the limit
from positive examples \cite{GarciaVidal,GarciaVidalOncina}. Below we recall the basic
argument; we refer to \cite{GarciaVidal,GarciaVidalOncina,YokomoriKobayashi} for
efficient algorithms.

\begin{theorem}
\label{kTSSs learnable in the limit}
Any $k$-TSS language can be identified in the limit from positive examples.
\end{theorem}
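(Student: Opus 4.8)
The plan is to exploit the machinery just developed, in particular the fact that $\gamma_k \circ \alpha_k(L)$ is the smallest $k$-TSS language containing $L$ (Corollary~\ref{smallestlanguage}), and that $\mathcal{T}_k$ is finite. Let $L$ be the unknown $k$-TSS language, say $L = \gamma_k(Z)$ for some $k$-test vector $Z$; without loss of generality we may take $Z$ to be canonical, so that $Z = \alpha_k(L)$ by the proposition preceding this theorem. The learning algorithm will be: on input a finite sample $S \subseteq L$, output the hypothesis $\gamma_k(\alpha_k(S))$. This is clearly computable, since $\alpha_k(S)$ is obtained by collecting the finitely many length-$(k{-}1)$ prefixes and suffixes, length-$k$ segments, and short strings witnessed in $S$, and $\gamma_k$ of a $k$-test vector is a regular language with an effectively constructible acceptor (a sliding window of size $k$).

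First I would argue soundness of each hypothesis: since $S \subseteq L$, monotonicity of $\alpha_k$ (Proposition~\ref{alpha monotone}) gives $\alpha_k(S) \sqsubseteq \alpha_k(L) = Z$, and then monotonicity of $\gamma_k$ (Proposition~\ref{gamma monotone}) gives $\gamma_k(\alpha_k(S)) \subseteq \gamma_k(Z) = L$. So the algorithm never overgeneralizes. Next I would show convergence. Consider an infinite presentation $s_1, s_2, \dots$ enumerating all of $L$, and let $S_n = \{s_1,\dots,s_n\}$. The test vectors $\alpha_k(S_n)$ form a nondecreasing chain in the finite lattice $\mathcal{T}_k$, hence stabilize: there is $N$ with $\alpha_k(S_n) = \alpha_k(S_N)$ for all $n \geq N$. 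I would then check that the stable value is exactly $Z$. On one hand $\alpha_k(S_n) \sqsubseteq Z$ for all $n$ as above. On the other hand, because $Z$ is canonical, every allowed prefix, suffix and segment of $Z$, as well as every short string in $C$, actually occurs in some string of $L = \gamma_k(Z)$ (this is precisely what "no junk" means, together with $C \subseteq \gamma_k(Z)$); since the presentation is complete, each such witnessing string appears at some finite stage, so for $n$ large enough $\alpha_k(S_n) \sqsupseteq Z$. Hence the chain stabilizes at $Z$, and from that point on the hypothesis is $\gamma_k(Z) = L$.

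The main obstacle — really the only subtle point — is the treatment of the short-string component $C$ and the "boundary" strings of length exactly $k-1$, which is exactly why the technical condition $I \cap F = C \cap \Sigma^{k-1}$ was built into the definition of a $k$-test vector. I need that a canonical $Z$ has all of its components witnessed inside $\gamma_k(Z)$: junk prefixes/suffixes/segments are excluded by canonicity, and the short strings in $C \cap \Sigma^{<k-1}$ lie in $\gamma_k(Z)$ directly by the definition of $\gamma_k$, while the strings in $C \cap \Sigma^{k-1} = I \cap F$ occur as prefixes (and suffixes) of strings in $\gamma_k(Z)$ precisely because those prefixes are non-junk. One must be slightly careful that $\alpha_k(S_n)$ recovers the short strings correctly: a string $w$ of length $k-1$ in $L$ contributes $w$ to both $I_k(S_n)$ and $F_k(S_n)$, hence to $C_k(S_n)$; conversely a non-junk $u \in I \cap F$ is a prefix of some $w \in L$ and a suffix of some $w' \in L$, so once both $w$ and $w'$ have been presented, $u \in I_k(S_n) \cap F_k(S_n) \subseteq C_k(S_n)$. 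Assembling these observations shows the stabilized test vector equals $Z$ and the algorithm identifies $L = \gamma_k(Z)$ in the limit. Finally I would remark that one may equivalently phrase the argument without canonicity by noting that $\alpha_k(S_n)$ converges to $\alpha_k \circ \gamma_k(Z)$, which by Corollary~\ref{smallestlanguage} (or the corollary that $\gamma_k \circ \alpha_k \circ \gamma_k = \gamma_k$) still generates $L$.
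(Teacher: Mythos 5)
Your proof is correct and follows essentially the same strategy as the paper's: the learner outputs $\gamma_k(\alpha_k(S_n))$, soundness ($\gamma_k(\alpha_k(S_n))\subseteq L$) comes from the Galois connection, and convergence comes from the finiteness of $\mathcal{T}_k$. The only real divergence is in the last step: the paper never identifies the limiting test vector explicitly, but instead sandwiches $L_i \subseteq \gamma_k\circ\alpha_k(L_i) \subseteq \gamma_k\circ\alpha_k(L_n) \subseteq L$ for all $i$ and takes the union over $i$ to conclude $L = \gamma_k\circ\alpha_k(L_n)$; you instead pin down the limit as the canonical vector $Z = \alpha_k(L)$ by a component-by-component witnessing argument. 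Your route is sound and more informative (it says exactly when convergence happens, namely once every non-junk prefix, suffix, segment and short string of $Z$ has been witnessed), but it leans on the unproved proposition that canonical vectors satisfy $\alpha_k\circ\gamma_k(Z)=Z$, whereas the paper's union argument needs only the stated corollaries of the Galois connection; your closing remark that one can bypass canonicity via $\gamma_k\circ\alpha_k\circ\gamma_k = \gamma_k$ recovers essentially the paper's economy.
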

\begin{proof}
Let $L$ be a $k$-TSS language and let $w_1, w_2, w_3,\ldots$
be an enumeration of $L$. Let $L_0 = \emptyset$ and $L_i = L_{i-1} \cup \{ w_i \}$, for $i>0$. We then have
\[
L_1 \subseteq L_2 \subseteq L_3 \subseteq \cdots
\]
By monotonicity of $\alpha_k$ (Proposition~\ref{alpha monotone}) we obtain
\begin{equation}
\label{chain of k-test vectors}
\alpha_k (L_1) \sqsubseteq \alpha_k (L_2) \sqsubseteq \alpha_k (L_3) \sqsubseteq \cdots
\end{equation}
and by monotonicity of $\gamma_k$ (Proposition~\ref{gamma monotone})
\begin{equation}
\label{chain of k-TSSs}
\gamma_k \circ \alpha_k (L_1) \subseteq \gamma_k \circ \alpha_k (L_2) \subseteq \gamma_k \circ \alpha_k (L_3) \subseteq \cdots
\end{equation}
Since $\gamma_k \circ \alpha_k$ is inflationary (Corollary~\ref{flationary}),
$L$ is a $k$-TSS language and, for each $i$, $\gamma_k \circ \alpha_k (L_i)$ is the
smallest $k$-TSS language that contains $L_i$ (Corollary~\ref{smallestlanguage}), we have
\begin{equation}
\label{sandwich}
L_i \subseteq \gamma_k \circ \alpha_k (L_i) \subseteq L
\end{equation}
Because $({\cal T}_k, \sqsubseteq)$ is a finite partial order it does not have
an infinite ascending chain. This means that sequence (\ref{chain of k-test vectors}) converges. But then sequence (\ref{chain of k-TSSs}) also converges, that is, there exists an $n$ such that, for all $m \geq n$,
$\gamma_k \circ \alpha_k (L_m) = \gamma_k \circ \alpha_k (L_n)$.
By equations (\ref{chain of k-TSSs}) and (\ref{sandwich}) we obtain, for all $i$,
\[
L_i \subseteq  \gamma_k \circ \alpha_k (L_i) \subseteq \gamma_k \circ \alpha_k (L_n) \subseteq L
\]
This implies $L = \gamma_k \circ \alpha_k (L_n)$, meaning that the sequence   (\ref{chain of k-TSSs}) of $k$-TSS languages converges to $L$.
\end{proof}

\section{Learning Unions of $k$-TSS Languages}
\label{sec:method}

In this section, we present guarantees concerning learnability in the limit of unions of $k$-TSS languages.
Then, we present an algorithm merging closest and compatible $k$-TSS languages.

\subsection{Generalities}

It is well-known that the class of $k$-testable languages in the strict sense
is not closed under union. 
Take for instance the two $3$-testable languages, represented by their DFA's in Figure \ref{test:Bild1}, that are generated by the following $3$-test vectors:
\begin{eqnarray*}
Z & = & \langle \{a a\}, \{a a\}, \{a a a\},  \{a a\}\rangle\\
Z' & = & \langle \{b a, b b\}, \{a b, b b \}, \{b a a, b a b, a a a, a a b\}, \{ b b \} \rangle 
\end{eqnarray*}
with $\Sigma=\{a,b\}$. 
The union $\gamma_3(Z) \cup\gamma_3(Z')$ of these languages,
represented by its DFA in Figure \ref{test:Bild2}, is not a $3$-testable language. Indeed, it is not a $k$-testable language for any value of $k>0$. 
For $k=1$, the only $k$-testable language that extends 
$\gamma_3(Z) \cup\gamma_3(Z')$ is $\Sigma^{\ast}$.
For $k\geq 2$, the problem is that since $a^{k-1}$ is an allowed prefix,
$a^{k-1} b$ is an allowed segment, and $a^{k-2} b$ is an allowed suffix,
$a^{k-1} b$ has to be in the language, even though it is not an element of
$\gamma_3(Z) \cup\gamma_3(Z')$.

It turns out that we can generalize Theorem~\ref{kTSSs learnable in the limit} to
unions of $k$-TSS languages.

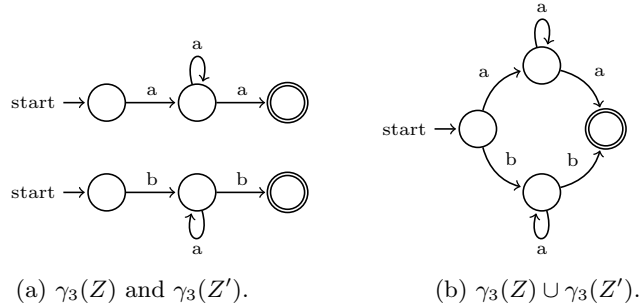
\begin{figure}[h!]
\begin{scriptsize}
\centering     
\captionsetup[subfigure]{position=b}
\subcaptionbox{$\gamma_3(Z)$ and $\gamma_3(Z')$.\label{test:Bild1}}{
\hspace*{1cm}
\begin{tikzpicture}[->,shorten >=1pt,auto,node distance=1.7cm,semithick]
      \node[initial,state,scale=0.7] (init)                    {};
      \node[state,scale=0.7]          (A) [right of=init] {};
      \node[state,accepting,scale=0.7]          (AA) [right of=A] {};
      	 \node[initial,state,scale=0.7] (init2)  [below of=init]                  {};
      \node[state,scale=0.7]          (B2) [right of=init2] {};
       \node[state,accepting,scale=0.7]         (BA2) [right of=B2] {};
      \path (init) edge              node {a} (A)
            (A)    edge				 node {a} (AA)
            (A)   edge [loop above] node {a} (A)
             (A)   edge [loop below,draw=none] node {} (A)
      		(init2) edge [] node {b} (B2)
            (B2)    edge	[loop below] node {a} (B2)   
            (B2)    edge	[loop above,draw=none] node {}(B2)   
            (B2)    edge	[]	node {b} (BA2)
            ;
    \end{tikzpicture}
    \hspace*{0.5cm}
}
\subcaptionbox{$\gamma_3(Z) \cup\gamma_3(Z')$.\label{test:Bild2}}{
    \begin{tikzpicture}[->,shorten >=1pt,auto,node distance=1.7cm,semithick]
      \node[initial,state,scale=0.7] (init)                    {};
      \node[state,scale=0.7]          			(A) [below right of=init] {};
      \node[state,scale=0.7]          (B) [above right of=init] {};
      \node[state,accepting,scale=0.7]          (AA) [above right of=A] {};
      \path (init) edge  [bend left]            node {a} (B)
            (B)    edge	[loop above] node {a} (B)
            (B)    edge	[bend left] node {a} (AA)
            (init) edge  [bend right]            node {b} (A)
              (A) edge  [bend right]            node {b} (AA)
              (A) edge  [loop below]            node {a} (A)
            ;
    \end{tikzpicture}
    \hspace*{1cm}
}
\caption{$k$-testable languages are not closed under union.}
\end{scriptsize}
\label{fig:notclosedunderunion}
\end{figure}

\begin{theorem}
\label{unions of kTSSs learnable in the limit}
Any language that is a union of $k$-TSS languages can
be identified in the limit from positive examples.
\end{theorem}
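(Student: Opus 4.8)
The key difficulty is that, unlike the single-language case, the union of $k$-TSS languages is \emph{not} itself $k$-TSS, so we cannot directly run the argument of Theorem~\ref{kTSSs learnable in the limit} on a single chain of $k$-test vectors. My plan is to reduce to that theorem by fixing the number of components. Suppose the target language is $L = L^{(1)} \cup \cdots \cup L^{(p)}$, where each $L^{(j)}$ is $k$-TSS; I would first argue that the learner may assume the bound $p$ is part of the hypothesis class (i.e.\ we learn ``unions of at most $p$ many $k$-TSS languages'' for each fixed $p$, and a standard dovetailing argument over $p = 1, 2, 3, \ldots$ handles the general case). With $p$ fixed, the hypothesis space is the set of $p$-tuples $\langle Z_1, \ldots, Z_p\rangle \in ({\cal T}_k)^p$, which is still \emph{finite}, so it carries no infinite ascending chain under the componentwise order.

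**Key steps.**
First I would set up the enumeration: let $w_1, w_2, \ldots$ enumerate $L$, and $L_i = \{w_1, \ldots, w_i\}$ as before. The learner, on seeing $L_i$, needs to produce a $p$-tuple of $k$-test vectors whose associated union of $k$-TSS languages is ``the smallest'' consistent generalization. The clean way: for a partition of the finite set $L_i$ into at most $p$ blocks $B_1, \ldots, B_p$, form $\langle \alpha_k(B_1), \ldots, \alpha_k(B_p)\rangle$ and take the induced language $\bigcup_j \gamma_k(\alpha_k(B_j))$. Among all such partitions, the learner picks one minimizing this union (with respect to $\subseteq$; ties broken by some fixed rule to guarantee convergence to a single hypothesis). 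Because each $L^{(j)}$ is $k$-TSS and $L_i \subseteq L$, one can always assign each $w \in L_i$ to (an index of) some component $L^{(j)}$ containing it; the resulting blocks $B_j$ satisfy $B_j \subseteq L^{(j)}$, hence by Corollary~\ref{smallestlanguage}, $\gamma_k(\alpha_k(B_j)) \subseteq L^{(j)}$, so this particular partition already gives a union contained in $L$. Therefore the learner's chosen (minimal) hypothesis $H_i$ satisfies $L_i \subseteq H_i \subseteq L$ — the analogue of the ``sandwich'' (\ref{sandwich}).

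**Convergence.**
For convergence I would argue as follows. There are only finitely many possible hypotheses (finitely many $p$-tuples from $({\cal T}_k)^p$), so only finitely many distinct languages of the form $\bigcup_j \gamma_k(Z_j)$ arise. The chain $L_1 \subseteq L_2 \subseteq \cdots$ forces the sequence of minimal consistent hypotheses to be (weakly) increasing in an appropriate sense, and since it lives in a finite set it stabilizes: there is $n$ with $H_m = H_n$ for all $m \geq n$. As every $w \in L$ equals some $w_i$, we get $w \in L_i \subseteq H_i \subseteq H_n$ for $i \geq n$ large enough, combined with monotonicity this gives $L \subseteq H_n$; together with $H_n \subseteq L$ we conclude $H_n = L$. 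Here one must be a little careful that the ``minimal consistent hypothesis'' is well-defined and that the tie-breaking makes the sequence eventually constant rather than merely eventually language-equivalent; I'd handle this by fixing, once and for all, a total order on $({\cal T}_k)^p$ refining $\sqsubseteq^p$ and always choosing the $\sqsubseteq^p$-minimal tuple that is then first in this total order.

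**Main obstacle.**
The genuine subtlety — and the step I expect to require the most care — is the minimization over partitions: it is not a priori obvious that a $\subseteq$-least element exists among the languages $\{\bigcup_j \gamma_k(\alpha_k(B_j)) : (B_1, \ldots, B_p) \text{ partitions } L_i\}$, since unions of $k$-TSS languages do not form a lattice in any obvious way, and the ``minimize the union'' selection must be reconciled with proving the sandwich bound $H_i \subseteq L$. The resolution is to minimize over the \emph{finite} set of achievable unions (every union of $p$ $k$-TSS languages that sit between $L_i$-blocks and ambient constraints), using the finiteness of $({\cal T}_k)^p$ to guarantee at least one minimal element, and then to note that the block assignment coming from the true decomposition of $L$ witnesses that \emph{some} consistent hypothesis is $\subseteq L$, so any minimal one is too. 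Everything else is a routine transcription of the proof of Theorem~\ref{kTSSs learnable in the limit}.
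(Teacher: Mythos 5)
Your overall strategy---reduce to a finite hypothesis space of tuples of $k$-test vectors, maintain a ``sandwich'' $L_i \subseteq H_i \subseteq L$, and conclude by finiteness---has the right shape, but the step you yourself flag as the main obstacle is where the argument genuinely breaks. You write that the true decomposition of $L$ witnesses that \emph{some} partition-hypothesis is contained in $L$, ``so any minimal one is too.'' This confuses a \emph{minimal} element with a \emph{minimum}: the finite set of achievable unions $\bigcup_j \gamma_k(\alpha_k(B_j))$ may contain several pairwise $\subseteq$-incomparable minimal elements, and the one your learner selects need not be contained in $L$ merely because some other consistent hypothesis is. Without $H_i \subseteq L$ the final step $H_n \subseteq L$ fails, and the learner may converge to a strict overgeneralization of $L$. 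A second, related soft spot is convergence: the set of partition-hypotheses at stage $i$ is not monotone in $i$ in a useful sense (each hypothesis at stage $i+1$ dominates \emph{some} hypothesis at stage $i$, but that does not make your chosen minimal element weakly increasing), so the total-order tie-breaking does not obviously yield an eventually constant sequence of hypotheses.

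Both problems stem from the one design decision you did not need to make: bounding the number of blocks by $p$. Drop that bound and the partition of $L_i$ into singletons is always available; by monotonicity of $\alpha_k$ and $\gamma_k$ (Propositions~\ref{alpha monotone} and~\ref{gamma monotone}) it yields the unique $\subseteq$-least hypothesis $K_i = \bigcup_{j \leq i} \gamma_k \circ \alpha_k(\{w_j\})$, so minimal and minimum coincide. Since each $w_j$ lies in some $k$-TSS component of $L$ and $\gamma_k \circ \alpha_k(\{w_j\})$ is the smallest $k$-TSS language containing $w_j$ (Corollary~\ref{smallestlanguage}), the inclusion $K_i \subseteq L$ comes for free, and the chain $K_1 \subseteq K_2 \subseteq \cdots$ stabilizes because there are only finitely many $k$-TSS languages over $\Sigma$, hence finitely many unions of them. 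This is exactly the paper's proof. Note also that the dovetailing over $p$ is unnecessary even on your route, since every union of $k$-TSS languages is a union of at most $|{\cal T}_k|$ of them. Your partition-based learner is best viewed as a practical clustering heuristic---essentially what Section~\ref{sec:method} develops---rather than as the vehicle for the identification-in-the-limit proof.
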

\begin{proof}
Let $L = L_1 \cup \cdots \cup L_l$, where all the $L_p$ are $k$-TSS
languages, and let $w_1, w_2, w_3,\ldots$ be an enumeration of $L$.
Define, for $i>0$,
\begin{equation*}
    K_i = \bigcup_{j=1}^i \gamma_k \circ \alpha_k ( \{ w_j \} ).
\end{equation*}
Since each $w_j$ is included in a $k$-TSS language contained in $L$,
and $\gamma_k \circ \alpha_k ( \{ w_j \} )$ is the smallest $k$-TSS
language that contains $w_j$, we conclude that, for all $j$,
$\gamma_k \circ \alpha_k ( \{ w_j \} ) \subseteq L$, which in turn
implies $K_i \subseteq L$.
Since there are only finitely many $k$-test vectors and finitely many
$k$-TSS languages, the sequence
\begin{equation}
\label{sequence Ks}
K_1 \subseteq K_2 \subseteq K_3 \subseteq \cdots
\end{equation}
converges, that is there exists an $n$ such that, for all $m \geq n$,
$K_m = K_n$. This implies that all $w_j$ are included in $K_n$, that
is $L \subseteq K_n$. In combination with the above observation that all
$K_i$ are contained in $L$, this proves that sequence (\ref{sequence Ks})
converges to $L$.
\end{proof}

The proof of Theorem~\ref{unions of kTSSs learnable in the limit} provides us with
a simple first algorithm to learn unions of $k$-TSS languages: for each example word that we see, we compute the $k$-test vector and then we take the union of the languages denoted by all those $k$-test vectors. The problem with this algorithm is that potentially we end up with a huge number of different $k$-test vectors. Thus we would like to cluster as many $k$-test vectors in the union as we can, without changing the overall language.
Before we can introduce our clustering algorithm, we first need to define a metric
on $k$-test vectors.

\begin{definition}
The \emph{cardinality} of a $k$-test vector $Z = \langle I, F, T, C\rangle$ is defined by:
\[
|Z| = |I| + |F| + |T| + |C \cap \Sigma^{<k-1}|.
\] 
\end{definition}

\begin{definition}
\label{def-distance}
The function $d\colon \mathcal{T}_k \times \mathcal{T}_k \mapsto {\rm I\!R}^+$,
which defines the \emph{distance} between a pair of $k$-test vectors, is 
given by: $d(Z, Z') = | Z \bigtriangleup Z' |$.
\end{definition}

Intuitively, the distance between two $k$-test vectors is the number of prefixes, suffixes, substrings and short words that must be added/removed to transform one
$k$-test vector into the other.
For examples, see Fig. \ref{fig:distance-matrix}.

\begin{lemma}
\label{lemma}
Let $Z$, $Z'$ and $Z''$ be $k$ test vectors. Then
\begin{eqnarray}
\label{triangle1}
| Z \bigtriangleup Z' | & \leq & |Z| + |Z'|\\
\label{triangle2}
Z \bigtriangleup \bot & = & Z\\
\label{triangle3}
Z \bigtriangleup Z'  =  \bot & \Leftrightarrow & Z = Z'\\
\label{trianglecommutes}
Z \bigtriangleup Z' & = & Z' \bigtriangleup Z\\
\label{triangle4}
Z \bigtriangleup (Z' \bigtriangleup Z'') & = & (Z \bigtriangleup Z') \bigtriangleup Z''\\
\label{triangle5}
Z \bigtriangleup Z' & = & (Z \bigtriangleup Z'') \bigtriangleup (Z'' \bigtriangleup Z')
\end{eqnarray}
\end{lemma}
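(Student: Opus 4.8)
The plan is to reduce every claim to the standard algebra of set-theoretic symmetric difference. The starting point is the observation that, for a $k$-test vector $Z = \langle I, F, T, C \rangle$, the short-string component is redundant once we know its part of length below $k-1$: since $\Sigma^{<k} = \Sigma^{<k-1} \uplus \Sigma^{k-1}$ and the defining condition gives $C \cap \Sigma^{k-1} = I \cap F$, we have $C = (C \cap \Sigma^{<k-1}) \uplus (I \cap F)$. Hence the map
\[
\phi(\langle I, F, T, C\rangle) \;=\; (\,I,\; F,\; T,\; C \cap \Sigma^{<k-1}\,)
\]
is a bijection from $\mathcal{T}_k$ onto $2^{\Sigma^{k-1}} \times 2^{\Sigma^{k-1}} \times 2^{\Sigma^k} \times 2^{\Sigma^{<k-1}}$, and by definition $|Z|$ equals the sum of the cardinalities of the four coordinates of $\phi(Z)$. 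I would record this first.

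The crucial step is then to check that $\phi$ carries $\bigtriangleup$ on $\mathcal{T}_k$ to coordinatewise set symmetric difference, and $\bot$ to the all-empty tuple. For the first three coordinates this is immediate from the definition of $\bigtriangleup$. For the last coordinate one computes, distributing $\cap\, \Sigma^{<k-1}$ over $\bigtriangleup$,
\[
\bigl(C \bigtriangleup C' \bigtriangleup (I' \cap F) \bigtriangleup (I \cap F')\bigr) \cap \Sigma^{<k-1} \;=\; (C \cap \Sigma^{<k-1}) \bigtriangleup (C' \cap \Sigma^{<k-1}),
\]
where the two correction terms drop out because $I' \cap F,\, I \cap F' \subseteq \Sigma^{k-1}$ while $\Sigma^{k-1} \cap \Sigma^{<k-1} = \emptyset$. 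This is the only point at which the specific shape of the $C$-component of $\bigtriangleup$ enters, and I expect it to be essentially the whole mathematical content of the lemma; everything else is bookkeeping.

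Granting that $\phi$ is a bijection intertwining $\bigtriangleup$ with coordinatewise symmetric difference and sending $\bot$ to $\mathbf 0$, the six statements become transported versions of elementary facts about $(2^X, \bigtriangleup, \emptyset)$: it is an abelian group in which each element is its own inverse, and $|A \bigtriangleup B| = |A| + |B| - 2|A \cap B| \le |A| + |B|$. Concretely, (\ref{triangle1}) follows by applying this cardinality inequality in each coordinate and summing; (\ref{triangle2}), (\ref{trianglecommutes}) and (\ref{triangle4}) follow from neutrality of $\emptyset$, commutativity, and associativity of $\bigtriangleup$ in each coordinate, combined with injectivity of $\phi$; (\ref{triangle3}) follows because a coordinatewise symmetric difference is empty precisely when the two tuples coincide, again using injectivity of $\phi$; and (\ref{triangle5}) is derived from (\ref{triangle4}), (\ref{trianglecommutes}), (\ref{triangle3}) (which gives $Z'' \bigtriangleup Z'' = \bot$) and (\ref{triangle2}). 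The one point to watch in the write-up is ordering: the intertwining property must be established before it is invoked to justify the appeals to injectivity of $\phi$ in the last three items.
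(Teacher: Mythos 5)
Your proof is correct, but it is organized differently from the paper's. The paper verifies each of the six identities by direct componentwise expansion: it writes out the $C$-component of $Z \bigtriangleup Z'$ (including the correction terms $(I'\cap F) \bigtriangleup (I\cap F')$) in full for each claim, which for associativity in particular requires a fairly long computation showing that the two nested expansions of the fourth component agree. You instead establish a single structural fact --- that $Z \mapsto (I,F,T,C\cap\Sigma^{<k-1})$ is an injection into a product of power sets under which $\bigtriangleup$ becomes plain coordinatewise symmetric difference and $\bot$ becomes the empty tuple --- and then transport the abelian-group and cardinality properties of $(2^X,\bigtriangleup,\emptyset)$ back along it. The mathematical kernel is the same in both arguments, namely that the correction terms lie in $\Sigma^{k-1}$ and therefore vanish after intersecting with $\Sigma^{<k-1}$ (the paper uses exactly this step inside its proof of the cardinality bound), but your packaging makes (\ref{triangle2})--(\ref{triangle4}) immediate and eliminates the error-prone associativity computation. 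What it costs you is one extra obligation that should be stated explicitly: to apply $\phi$ to $Z \bigtriangleup Z'$ you need $Z \bigtriangleup Z'$ to satisfy the defining constraint $I\cap F = C\cap\Sigma^{k-1}$, i.e.\ to be a $k$-test vector at all; the paper delegates this to the reader just after defining $\bigtriangleup$, and it follows from the companion computation $\bigl(C \bigtriangleup C' \bigtriangleup (I'\cap F) \bigtriangleup (I\cap F')\bigr)\cap\Sigma^{k-1} = (I\bigtriangleup I')\cap(F\bigtriangleup F')$, so you should record it alongside your intersection-with-$\Sigma^{<k-1}$ identity. With that addition, and with the ordering caveat you already note, the argument is complete.
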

\begin{proof}
Let $Z = \langle I, F, T, C \rangle$, $Z' = \langle I', F', T', C' \rangle$
and $Z'' = \langle I'', F'', T'', C'' \rangle$.
Then equality (\ref{triangle1}) can be derived as follows:
\begin{eqnarray*}
| Z \bigtriangleup Z' | & = & | I \bigtriangleup I' | + | F \bigtriangleup F' |
+ | T \bigtriangleup T' |\\
& & +
| (C \bigtriangleup C' \bigtriangleup (I' \cap F) \bigtriangleup (I \cap F')) \cap \Sigma^{<k-1} |\\
& = & | I \bigtriangleup I' | + | F \bigtriangleup F' |
+ | T \bigtriangleup T' |\\
& &  +
| (C\cap \Sigma^{<k-1}) \bigtriangleup (C'\cap \Sigma^{<k-1}) \bigtriangleup (I' \cap F \cap \Sigma^{<k-1}) \bigtriangleup (I \cap F'\cap \Sigma^{<k-1}) |\\
& = & | I \bigtriangleup I' | + | F \bigtriangleup F' |
+ | T \bigtriangleup T' |\\
& &  +
| (C\cap \Sigma^{<k-1}) \bigtriangleup (C'\cap \Sigma^{<k-1}) \bigtriangleup \emptyset \bigtriangleup \emptyset |\\
& = & | I \bigtriangleup I' | + | F \bigtriangleup F' |
+ | T \bigtriangleup T' | +
| (C\cap \Sigma^{<k-1}) \bigtriangleup (C'\cap \Sigma^{<k-1}) |\\
& \leq & | I | + | I' | + | F | + | F' |
+ | T | + | T' | +
| (C\cap \Sigma^{<k-1}) | + | (C'\cap \Sigma^{<k-1}) |\\
& = & |Z| + |Z'|
\end{eqnarray*}
Identities (\ref{triangle2}), (\ref{triangle3}), (\ref{trianglecommutes}) and (\ref{triangle4}) follow from the definitions and the corresponding identities for sets:
\begin{eqnarray*}
Z \bigtriangleup \bot & = & \langle I \bigtriangleup \emptyset, F \bigtriangleup \emptyset, T \bigtriangleup \emptyset, C \bigtriangleup \emptyset \bigtriangleup (\emptyset \cap F) \bigtriangleup (I \cap \emptyset)\rangle\\
& = & \langle I, F, T, C \rangle = Z\\
Z \bigtriangleup Z' = \bot 
& \Leftrightarrow & I \bigtriangleup I'  = \emptyset \wedge
F \bigtriangleup F'  = \emptyset \wedge
T \bigtriangleup T'  = \emptyset \wedge
C \bigtriangleup C' \bigtriangleup (I' \cap F) \bigtriangleup (I \cap F')  = \emptyset\\
& \Leftrightarrow & I = I' \wedge F = F' \wedge T = T' \wedge 
C \bigtriangleup C' \bigtriangleup (I' \cap F) \bigtriangleup (I \cap F')  = \emptyset\\
& \Leftrightarrow & I = I' \wedge F = F' \wedge T = T' \wedge 
C \bigtriangleup C' \bigtriangleup (I \cap F) \bigtriangleup (I \cap F)  = \emptyset\\
& \Leftrightarrow & I = I' \wedge F = F' \wedge T = T' \wedge 
C \bigtriangleup C' \bigtriangleup \emptyset  = \emptyset\\
& \Leftrightarrow & I = I' \wedge F = F' \wedge T = T' \wedge 
C \bigtriangleup C'   = \emptyset\\
& \Leftrightarrow & I = I' \wedge F = F' \wedge T = T' \wedge C = C'\\
& \Leftrightarrow & Z = Z'\\
Z \bigtriangleup Z' & = & \langle I \bigtriangleup I', F \bigtriangleup F', T \bigtriangleup T', C \bigtriangleup C' \bigtriangleup (I' \cap F) \bigtriangleup (I \cap F')\rangle\\
& = & \langle I' \bigtriangleup I, F' \bigtriangleup F, T' \bigtriangleup T, C' \bigtriangleup C \bigtriangleup (I \cap F') \bigtriangleup (I' \cap F)\rangle\\
& = & Z' \bigtriangleup Z\\
Z \bigtriangleup (Z' \bigtriangleup Z'') & = & 
\langle I \bigtriangleup (I' \bigtriangleup I''),
F \bigtriangleup (F' \bigtriangleup F''),
T \bigtriangleup (T' \bigtriangleup T''),\\
& & C \bigtriangleup ( C' \bigtriangleup C'' \bigtriangleup (I'' \cap F') \bigtriangleup (I' \cap F'')) \bigtriangleup ((I' \bigtriangleup I'') \cap F) \bigtriangleup (I \cap (F' \bigtriangleup F''))\rangle \\
& = & (Z \bigtriangleup Z') \bigtriangleup Z''\\
& = & 
\langle I \bigtriangleup (I' \bigtriangleup I''),
F \bigtriangleup (F' \bigtriangleup F''),
T \bigtriangleup (T' \bigtriangleup T''),\\
& & C \bigtriangleup  C' \bigtriangleup C'' \bigtriangleup (I'' \cap F') \bigtriangleup (I' \cap F'') \bigtriangleup (I' \cap F) \bigtriangleup (I'' \cap F) \bigtriangleup (I \cap F') \bigtriangleup (I \cap F'')\rangle \\
& = & 
\langle (I \bigtriangleup I') \bigtriangleup I'',
(F \bigtriangleup F') \bigtriangleup F'',
(T \bigtriangleup T') \bigtriangleup T'',\\
& & (C \bigtriangleup  C' \bigtriangleup (I' \cap F) \bigtriangleup (I \cap F')) 
\bigtriangleup C'' \bigtriangleup (I'' \cap (F \bigtriangleup F')) \bigtriangleup ((I \bigtriangleup I') \cap F'') \rangle \\
& = & (Z \bigtriangleup Z') \bigtriangleup Z''
\end{eqnarray*}
Finally, equality (\ref{triangle5}) follows from the preceding equalities (\ref{triangle2}), (\ref{triangle3}) and (\ref{triangle4}):
\begin{eqnarray*}
Z \bigtriangleup Z' & = & (Z \bigtriangleup \bot) \bigtriangleup Z'\\
& = & (Z \bigtriangleup (Z'' \bigtriangleup Z'')) \bigtriangleup Z'\\
& = & ((Z \bigtriangleup Z'') \bigtriangleup Z'') \bigtriangleup Z'\\
& = & (Z \bigtriangleup Z'') \bigtriangleup (Z'' \bigtriangleup Z')
\end{eqnarray*}
\end{proof}

\begin{proposition}
\label{prop-distance-metric}
Distance function $d$ is a metric.
\end{proposition}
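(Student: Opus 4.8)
The goal is to verify the four metric axioms for $d(Z,Z') = |Z \bigtriangleup Z'|$, and essentially all the work has already been packaged into Lemma~\ref{lemma}. I would first record a small auxiliary observation: for any $k$-test vector $W = \langle I, F, T, C\rangle$ we have $|W| = 0$ if and only if $W = \bot$. The ``if'' direction is immediate. For the ``only if'' direction, $|W| = |I| + |F| + |T| + |C \cap \Sigma^{<k-1}| = 0$ forces $I = F = T = \emptyset$ and $C \cap \Sigma^{<k-1} = \emptyset$; then the defining condition $I \cap F = C \cap \Sigma^{k-1}$ gives $C \cap \Sigma^{k-1} = \emptyset$, and since $C \subseteq \Sigma^{<k}$ this yields $C = \emptyset$, i.e.\ $W = \bot$. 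This is the one place where the technical constraint on $C$ and the fact that $|\cdot|$ counts only $C \cap \Sigma^{<k-1}$ genuinely matter, so it is worth stating explicitly.

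With that in hand the four axioms follow mechanically. Non-negativity, $d(Z,Z') \geq 0$, is clear since $d$ is a sum of cardinalities. Symmetry, $d(Z,Z') = d(Z',Z)$, is exactly equality~(\ref{trianglecommutes}) of Lemma~\ref{lemma} applied inside $|\cdot|$. For the identity of indiscernibles, $d(Z,Z') = 0$ iff $|Z \bigtriangleup Z'| = 0$, which by the auxiliary observation holds iff $Z \bigtriangleup Z' = \bot$, which by~(\ref{triangle3}) holds iff $Z = Z'$. Finally, for the triangle inequality I would combine~(\ref{triangle5}) and~(\ref{triangle1}): for any third vector $Z''$,
\[
d(Z,Z') = |Z \bigtriangleup Z'| = |(Z \bigtriangleup Z'') \bigtriangleup (Z'' \bigtriangleup Z')| \leq |Z \bigtriangleup Z''| + |Z'' \bigtriangleup Z'| = d(Z,Z'') + d(Z'',Z').
\]

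\textbf{Main obstacle.} There is no deep obstacle here—the content of the proposition lives entirely in Lemma~\ref{lemma}, whose proof does the real algebraic bookkeeping with the $C$-component of $\bigtriangleup$. The only subtle point, and the one I would be careful to spell out rather than wave away, is the auxiliary claim $|W| = 0 \Leftrightarrow W = \bot$: the cardinality deliberately ignores $C \cap \Sigma^{k-1}$, so without the $k$-test vector axiom $I \cap F = C \cap \Sigma^{k-1}$ one could imagine a nonzero $C$ contributing nothing to $|W|$, and the identity-of-indiscernibles axiom would fail. Everything else is a direct substitution into the already-proved identities.
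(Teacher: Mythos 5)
Your proof is correct and follows essentially the same route as the paper: non-negativity from the definition, symmetry from commutativity of $\bigtriangleup$, identity of indiscernibles from Lemma~\ref{lemma}(\ref{triangle3}), and the triangle inequality from (\ref{triangle1}) and (\ref{triangle5}). Your auxiliary observation that $|W|=0$ iff $W=\bot$ (using the axiom $I \cap F = C \cap \Sigma^{k-1}$ to handle the fact that $|\cdot|$ ignores $C \cap \Sigma^{k-1}$) makes explicit a step the paper passes over silently in the chain $|Z \bigtriangleup Z'| = 0 \Leftrightarrow Z \bigtriangleup Z' = \bot$, and is a worthwhile addition.
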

\begin{proof}
In order to prove that $d$ is a metric, we need to show that it satisfies the following
four properties, for all $Z, Z', Z'' \in \mathcal{T}_k$,
\begin{enumerate}
    \item $d(Z,Z') \geq 0$ (non-negativity)
    \item $d(Z,Z') = 0$ iff $Z = Z'$ (identity of indiscernibles)
    \item $d(Z,Z') = d(Z',Z)$ (symmetry)
    \item $d(Z,Z'') \leq d(Z, Z') + d(Z', Z'')$ (triangle inequality)
\end{enumerate}
Non-negativity follows immediately from the definition of $d$.
Identity of indiscernibles can be derived using Lemma~\ref{lemma}(\ref{triangle3}):
\begin{eqnarray*}
d(Z,Z') = 0 & \Leftrightarrow & | Z \bigtriangleup Z' | = 0
 \Leftrightarrow  Z \bigtriangleup Z'  = \bot
 \Leftrightarrow  Z = Z'.
\end{eqnarray*}
Symmetry follows since $\bigtriangleup$ commutes (Lemma~\ref{lemma}(\ref{trianglecommutes})):
\[
d(Z,Z') = | Z \bigtriangleup Z' | = | Z' \bigtriangleup Z | = d(Z',Z).
\]
The triangle inequality follows using identities (\ref{triangle1})
and (\ref{triangle5}) from Lemma~\ref{lemma}:
\begin{eqnarray*}
d(Z, Z') & = & | Z \bigtriangleup Z' |\\
& = & |(Z \bigtriangleup Z'') \bigtriangleup (Z'' \bigtriangleup Z')|\\
& \leq & |Z \bigtriangleup Z''| + | Z'' \bigtriangleup Z'|\\
& = & d(Z, Z'') + d(Z'', Z').
\end{eqnarray*}
\end{proof}


The next proposition provides a necessary and sufficient condition for the union of the languages 
of two $k$-test vectors to be equal to the language of the union of these $k$-test vectors. 

\begin{proposition}
\label{union prop}
Suppose $Z = \langle I, F, T, C \rangle$ and $Z' =  \langle I', F', T', C' \rangle$
are canonical $k$-test vectors, for some $k$.
Let $\bullet \not\in \Sigma$ be a fresh symbol, and let $G=(V,E)$ be the directed graph
with
\begin{eqnarray*}
V & = & \{ \bullet u \mid u \in I \cup I' \} \cup T \cup T' \cup \{ u \bullet \mid u \in F \cup F'\},\\
E & = & \{ (au, ub) \in V \times V \mid a, b \in \Sigma \cup \{ \bullet \}, u \in \Sigma^{k-1} \}.
\end{eqnarray*}
Suppose each vertex in $V$ is colored either red, blue or white.
Vertices in $T \setminus T'$ are red, vertices in $T'\setminus T$ are blue, and vertices
in $T \cap T'$ are white.
A vertex $\bullet u$ is red if $u \in I \setminus I'$, blue if
$u \in I' \setminus I$, and white if $u \in I \cap I'$.
A vertex $u \bullet$ is red if $u \in F \setminus F'$, blue if
$u \in F' \setminus F$, and white if $u \in F \cap F'$.
Then $\gamma_k (Z \sqcup Z') = \gamma_k(Z) \cup \gamma_k(Z')$ iff
there exists no path in $G$ from a red vertex to a blue vertex, nor from a blue vertex to a red vertex.
\end{proposition}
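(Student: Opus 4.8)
The plan is to translate membership in the three languages into the existence of suitable walks in $G$, and then read both directions off the colouring. First I record the easy facts. By monotonicity of $\gamma_k$ (Proposition~\ref{gamma monotone}) we always have $\gamma_k(Z)\cup\gamma_k(Z')\subseteq\gamma_k(Z\sqcup Z')$, so only the reverse inclusion is ever at stake. Strings of length $<k-1$ are harmless in either direction: such a string lies in $\gamma_k(Z\sqcup Z')$ iff it lies in $C\cup C'$ (the extra short-string components $I\cap F'$ and $I'\cap F$ of $Z\sqcup Z'$ contain only strings of length $k-1$), hence iff it lies in $\gamma_k(Z)\cup\gamma_k(Z')$, irrespective of the colouring. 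So everything reduces to strings of length $\geq k-1$. For $w=a_1\cdots a_n$ with $n\geq k-1$, put $p=a_1\cdots a_{k-1}$, $s=a_{n-k+2}\cdots a_n$, and let $P_w$ be the walk $\bullet p\to a_1\cdots a_k\to\cdots\to a_{n-k+1}\cdots a_n\to s\bullet$ in $G$ (degenerating to the single edge $\bullet p\to p\bullet$ when $n=k-1$). Two observations carry the whole argument: (i) the edge relation of $G$ is exactly ``overlap in the middle $k-1$ symbols'', so $P_w$ really is a walk in $G$ as soon as its vertices all lie in $V$; (ii) using the $k$-test-vector identity $I\cap F=C\cap\Sigma^{k-1}$ for $Z$, $Z'$ and $Z\sqcup Z'$, one checks that for $n\geq k-1$: $w\in\gamma_k(Z\sqcup Z')$ iff all vertices of $P_w$ lie in $V$; $w\in\gamma_k(Z)$ iff moreover every vertex of $P_w$ is red or white (since a vertex $\bullet u$ is red or white exactly when $u\in I$, a vertex $u\bullet$ exactly when $u\in F$, and a segment exactly when it is in $T$); and symmetrically $w\in\gamma_k(Z')$ iff every vertex of $P_w$ is blue or white.

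\paragraph{No bad path implies equality.}
Assume there is no path in $G$ from a red vertex to a blue vertex or from a blue vertex to a red vertex, and take $w\in\gamma_k(Z\sqcup Z')$ with $|w|\geq k-1$. By observation (ii) all vertices of $P_w$ lie in $V$, hence are coloured. If $P_w$ contained both a red vertex and a blue vertex, the sub-walk between them would contain a simple path from a red vertex to a blue one or conversely, contradicting the hypothesis. Therefore every vertex of $P_w$ is red or white, or every vertex is blue or white; by observation (ii) this gives $w\in\gamma_k(Z)$ in the first case and $w\in\gamma_k(Z')$ in the second. Together with the trivial inclusion and the short-string case, this yields $\gamma_k(Z\sqcup Z')=\gamma_k(Z)\cup\gamma_k(Z')$.

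\paragraph{A bad path implies inequality.}
Conversely, suppose there is a path in $G$ from a red vertex $v_0$ to a blue vertex $v_m$; the blue-to-red case is symmetric by exchanging the roles of $Z$ and $Z'$ (and $\gamma_k(Z\sqcup Z')$, $\gamma_k(Z)\cup\gamma_k(Z')$ are unchanged). Since a suffix vertex $u\bullet$ has no outgoing edge and a prefix vertex $\bullet u$ has no incoming edge, $v_0$ is a prefix vertex or a segment and $v_m$ is a suffix vertex or a segment. If $v_0$ is a segment then $v_0\in T$ (it is red), so by canonicity of $Z$ it occurs as a factor of some string of $\gamma_k(Z)$; the initial part of that string's walk is a path in $G$ through red-or-white vertices only, running from a prefix vertex $\bullet p$ with $p\in I$ down to $v_0$. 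Symmetrically, if $v_m$ is a segment then $v_m\in T'$ and canonicity of $Z'$ supplies a path through blue-or-white vertices from $v_m$ to a suffix vertex $s\bullet$ with $s\in F'$. Concatenating the left extension, the path $v_0\to\cdots\to v_m$, and the right extension produces a walk from a prefix vertex $\bullet p$ ($p\in I$) to a suffix vertex $s\bullet$ ($s\in F'$) whose vertices all lie in $V$; let $w$ be the string it spells. By observation (ii), $w\in\gamma_k(Z\sqcup Z')$. But $w$ still ``carries'' the red vertex $v_0$ --- as its prefix if $v_0$ was a prefix vertex (no left extension was possible), as an interior factor otherwise --- and $v_0$ being red means that component of $w$ belongs to the $Z$-set but not the $Z'$-set; hence $w\notin\gamma_k(Z')$. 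Likewise the blue vertex $v_m$ shows $w\notin\gamma_k(Z)$. (The smallest instance, a one-edge path $\bullet p\to p\bullet$ with $p\in(I\setminus I')\cap(F'\setminus F)$, is just the case of empty extensions, with $w=p$.) Thus $w\in\gamma_k(Z\sqcup Z')\setminus(\gamma_k(Z)\cup\gamma_k(Z'))$ and the two languages differ.

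\paragraph{Main obstacle.}
I expect the delicate part to be the converse direction, and within it the path-extension bookkeeping: one must verify that a red-to-blue path whose endpoints are mere segments can be completed to a full prefix-to-suffix walk without ever passing through a vertex of the opposite colour, and that the original red and blue vertices persist in the spelled string as genuine witnesses of non-membership in $\gamma_k(Z')$ and $\gamma_k(Z)$ respectively. Canonicity of $Z$ and $Z'$ is exactly what makes these extensions exist; without it a red segment need not occur in any string of $\gamma_k(Z)$ and the construction collapses. The forward direction and the preliminary reductions (the trivial inclusion, short strings, and the dictionary between membership and walks in $G$) are comparatively routine once observation (ii) is established.
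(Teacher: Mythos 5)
Your proof is correct and follows essentially the same route as the paper's: the correspondence between strings of length at least $k-1$ in $\gamma_k(Z\sqcup Z')$ and prefix-to-suffix walks in $G$, the colour characterization of membership in $\gamma_k(Z)$ and $\gamma_k(Z')$, and the use of canonicity to extend a red-to-blue path to a full prefix-to-suffix walk witnessing a word outside the union. Your treatment is somewhat more explicit than the paper's (notably on short strings and the degenerate $|w|=k-1$ case), but the argument is the same.
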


\begin{proof}
Let $\Pi$ be the set of paths in $G$ from a vertex in $\{ \bullet u \mid u \in I \cup I' \}$ to a vertex in
$\{ u \bullet \mid u \in F \cup F'\}$.
There exists a 1-to-1 correspondence between paths in $\Pi$ and strings in
$\gamma_k (Z \sqcup Z')$ with length at least $k-1$. Because suppose 
\[
w = a_1 \cdots a_m
\]
is a string in $\gamma_k (Z \sqcup Z')$, for some $m \geq k-1$.
Let $a_0 = \bullet$, $a_{m+1} = \bullet$ and let, for $0 \leq j \leq m-k+2$,
\[
w_j = a_j  \cdots a_{j+k-1}
\]
be the substring of $a_0 a_1 \cdots a_m a_{m+1}$ with length $k$ starting at $a_j$.
Using the fact that $w \in \gamma_k (Z \sqcup Z')$, the reader can easily check that all the
string $w_j$ are vertices from $V$. Moreover, for all $0 \leq j < m-k+2$, the pair
$(w_j,w_{j+1})$ is an edge of $E$.
Thus we may conclude that sequence $w_0 ,\ldots, w_{m-k+2}$ is a path in $\Pi$.
Conversely, 
it is trivial to extract from each path in $\Pi$ a corresponding string in 
$\gamma_k (Z \sqcup Z')$ with length at least $k-1$.
Moreover, paths in $\Pi$ that only visit red and white vertices correspond with
words in $\gamma_k(Z)$,
paths in $\Pi$ that only visit blue and white vertices correspond with
words in $\gamma_k(Z')$, and
paths in $\Pi$ that contain both a red and a blue vertex correspond with words
in $\gamma_k (Z \sqcup Z') \setminus (\gamma_k(Z) \cup \gamma_k(Z'))$.

\paragraph{$\Rightarrow$}
Suppose there exists a path in $G$ from a red vertex $v$ to a blue vertex $w$. (The case in which
there exists a path from a blue vertex to a red vertex is symmetric.)
Since $Z$ and $Z'$ are canonical, each red vertex in $V$ occurs on a path in $\Pi$ that only
contains red or white vertices, and
each blue vertex in $V$ occurs on a path in $\Pi$ that only contains blue or white vertices.
This means we can construct a path $\pi \in \Pi$ that contains both the red vertex $v$ and the
blue vertex $w$. By the above observations, path $\pi$ corresponds to a word in
$\gamma_k (Z \sqcup Z')$ that is not in $\gamma_k(Z) \cup \gamma_k(Z')$.

\paragraph{$\Leftarrow$}
Now suppose there exists no path in $G$ from a red vertex to a blue vertex, nor from a blue vertex to a red vertex.
Then any path in $\Pi$ either contains only red and white vertices, or it contains only blue and white vertices.
This implies that any word in $\gamma_k (Z \sqcup Z')$ with length at least $k-1$ is contained either in $\gamma_k(Z)$ or in $\gamma_k(Z')$. By construction all the words in $\gamma_k (Z \sqcup Z')$ with length less than $k-1$ are contained either in $C \subseteq \gamma_k(Z)$ or in $C'\subseteq \gamma_k(Z')$.
Thus we may conclude $\gamma_k (Z \sqcup Z') \subseteq \gamma_k(Z) \cup \gamma_k(Z')$.
Since $\gamma_k$ is monotone (Proposition~\ref{gamma monotone}), the converse inclusion
$\gamma_k (Z \sqcup Z') \supseteq \gamma_k(Z) \cup \gamma_k(Z')$ holds as well.
Thus we may conclude $\gamma_k (Z \sqcup Z') = \gamma_k(Z) \cup \gamma_k(Z')$, as required.
\end{proof}

Suppose alphabet $\Sigma$ contains $n$ elements. Then
the size of graph $G$ is in $O(n \cdot |Z \cup Z'|)$, and we can construct $G$ from 
$Z$ and $Z'$ in time $O((n+k) \cdot |Z \cup Z'|)$. Since the reachability property in Proposition~\ref{union prop} can be decided in a time that is linear in the size of $G$, we obtain an $O((n+k) \cdot|Z \cup Z'|)$-time algorithm for
deciding $\gamma_k (Z \sqcup Z') = \gamma_k(Z) \cup \gamma_k(Z')$.


\subsection{Efficient algorithm}

Our algorithm to learn unions of $k$-testable languages is based on hierarchical clustering.
Given a set $\mathcal{S}$ of $n$ words, we compute its related set of $k$-test vectors $S = \{ \alpha_k (\{x\}) \mid x \in \mathcal{S} \}$.
Note that the $k$-test vectors are canonical.
Then, an $n \times n$ distance matrix is computed. To that end, the distance used is the pairwise distance between $k$-test vectors defined in Definition~\ref{def-distance}.
Next, the algorithm finds the closest pair of compatible $k$-test vectors $Z$ and $Z'$, such that $\gamma_k (Z \sqcup Z') = \gamma_k(Z) \cup \gamma_k(Z')$ and computes their union.
The distance between the merged $k$-test vectors and the remaining $k$-test vectors in $S$ is updated.
These two operations are repeated until all initial $k$-test vectors have been merged into one, or that no allowed union of two $k$-test vectors such that $\gamma_k (Z \sqcup Z') = \gamma_k(Z) \cup \gamma_k(Z')$ is possible.
We gather at the end of the process a linkage between $k$-test vectors, which can lead to the computation of a dendrogram.
When the number of $k$-test vectors to learn is known, one can use this expected number of languages to find the threshold that would, given the hierarchical clustering, return the desired unions of $k$-test vectors. 

\begin{algorithm}[h]
\caption{Linkage of $k$-test vectors.}
\label{algo:linkage-ktest-vectors}
 \DontPrintSemicolon
\KwIn{$S = \{ \alpha_k (\{x\}) \mid x \in \mathcal{S} \}$ : sample of $m$ $k$-test vectors, $D$: related distance matrix}
\KwOut{$L$: unsorted dendrogram of $k$-test vectors. }

$chain\gets[\,]$ \;


\While{$|S|>1$ \textbf{and} $ \exists a,b \in S$ s.t. $\gamma_k (a \sqcup b) = \gamma_k(a) \cup \gamma_k(b)$ }{

    \If{$\texttt{length}(chain)\leq3$}{
        $a \gets \text{(any element of $S$)}$ \;
    
        $chain \gets [a]$ \; 
    
        $b \gets \text{(any element of $S\setminus\{a\}$)}$ \;
    }
    \Else{
    
        $a \gets chain[-4]$\;
        
        $b \gets chain[-3]$\;
    
        Remove $chain[-1]$, $chain[-2]$ and $chain[-3]$ \;
    }
    \Do{$\texttt{length}(chain) \geq 3$ \textbf{and} $a = chain[-3]$}{
    
        $c \gets argmin_{x\neq a}d[x,a]$ with preference for $b$ \;
        
        $a,b \gets c,a$ \;
        
        \If{$\gamma_k (a \sqcup b) = \gamma_k(a) \cup \gamma_k(b)$  \tcp*{check if $\gamma_k(a) \cup \gamma_k(b)$ is a k-TSS language}}{
                Append $a$ to $chain$ \;
        }
        
    }

    Append $(a,b, D[a,b])$ to $L$  \;
    
    Remove $a,b$ from $S$ \;
    
    
    $D[a \sqcup b,x] = D[x,a \sqcup b] \gets d(a \sqcup b, x) , \forall x \in S$  \tcp*{Update D}

    $S \gets S \cup \{a \sqcup b\}$ \; \tcp*{new node is the union of the two k-test vectors}

}

\textbf{return} $L$ \;

\end{algorithm}

An efficient implementation for such hierarchical clustering algorithms is the nearest-neighbor chain algorithm \cite{benzecri1982construction}.
Our hierarchical clustering of $k$-test vectors is presented in Algorithm \ref{algo:linkage-ktest-vectors}.
We reuse notations from \cite{mullner2011modern}.

\begin{example}
Let $k=3$.
Given the sample of strings $\mathcal{S}$ in Table \ref{fig:3-test-vectors}, compute the associate sample of $3$-test vectors $S = \{Z_1,Z_2,\ldots,Z_{8}\}$.
Then, compute its distance matrix (Table \ref{fig:distance-matrix}) using the metric defined in Definition \ref{def-distance}.
Using Algorithm \ref{algo:linkage-ktest-vectors}, compute the related linkage matrix depicted in Table \ref{fig:linkage-matrix}.
We gather the dendrogram shown in Figure \ref{fig:hierarchical-clustering}, where the 3 remaining $3$-test vectors $Z_1 \sqcup Z_8$, $Z_2 \sqcup Z_5 \sqcup Z_7$ and $Z_3 \sqcup Z_4 \sqcup Z_6$ cannot be merged.
Indeed: 
\begin{itemize}
    \item $\gamma_k( Z_1 \sqcup Z_8 \sqcup Z_2 \sqcup Z_5  \sqcup Z_7) \neq  \gamma_k( Z_1 \sqcup Z_8 ) \cup \gamma_k( Z_2 \sqcup Z_5  \sqcup Z_7) $.
    \item $\gamma_k( Z_1 \sqcup Z_8 \sqcup Z_3 \sqcup Z_4 \sqcup Z_6) \neq  \gamma_k( Z_1 \sqcup Z_8 ) \cup \gamma_k( Z_3 \sqcup Z_4 \sqcup Z_6) $.
    \item $\gamma_k( Z_2 \sqcup Z_5 \sqcup Z_7 \sqcup Z_3 \sqcup Z_4 \sqcup Z_6) \neq  \gamma_k( Z_2 \sqcup Z_5  \sqcup Z_7) \cup \gamma_k(Z_3 \sqcup Z_4 \sqcup Z_6 ) $.
\end{itemize}
With a desired number of $3$-TSS languages to learn of 3, the returned languages are $\gamma_k(Z_1 \sqcup Z_8)$ and $\gamma_k(Z_2 \sqcup Z_5 \sqcup Z_7)$  and $ \gamma_k( Z_3 \sqcup Z_4 \sqcup Z_6 )$.
With a desired number of $3$-TSS languages to learn of 4, the returned languages would be $\gamma_k(Z_1 \sqcup Z_8)$ and $\gamma_k(Z_2 \sqcup Z_5 \sqcup Z_7)$  and $ \gamma_k( Z_3 )$ and $ \gamma_k( Z_4 \sqcup Z_6 )$ instead.

\begin{figure}[h!]
\centering
\captionsetup[subfigure]{position=b}
\subcaptionbox{Dataset and corresponding 3-test vectors \label{fig:3-test-vectors}}{
\begin{tabular}{|l|l|}
\hline
$\mathcal{S}$ & $S$ \\ \hline
baba & $Z_1 = \langle \{ba\},\{ba\},\{bab,aba\},\{\} \rangle $ \\ \hline
abba & $Z_2 = \langle \{ab\},\{ba\},\{abb,bba\},\{\} \rangle $ \\ \hline
abcabc & $Z_3 = \langle\{ab\},\{bc\},\{abc,bca,cab\},\{\} \rangle $ \\ \hline
cbacba & $Z_4 = \langle \{cb\},\{ba\},\{cba,bac,acb\},\{\} \rangle $ \\ \hline
abbbba & $Z_5 = \langle \{ab\},\{ab\},\{abb,bbb,bba\},\{\} \rangle $ \\ \hline
cbacbacba & $Z_6 = \langle \{cb\},\{ba\},\{cba,bac,acb\},\{\} \rangle $ \\ \hline
abbba & $Z_7 = \langle \{ab\},\{ba\},\{abb,bbb,bba\},\{\} \rangle $ \\ \hline
babababc & $Z_8 = \langle \{ba\},\{bc\},\{bab,aba,abc\},\{\} \rangle $ \\ \hline
\end{tabular}

}
\subcaptionbox{Distance matrix\label{fig:distance-matrix}}{
\begin{tabular}{|c|c|c|c|c|c|c|c|c|}
\hline
 & $Z_1$ & $Z_2$ & $Z_3$ & $Z_4$ & $Z_5$ & $Z_6$ & $Z_7$ & $Z_8$ \\ \hline
$Z_1$ & 0 & 6 & 9 & 7 & 7 & 7 & 7 & 3 \\ \hline
$Z_2$ & 6 & 0 & 7 & 7 & 1 & 7 & 1 & 9 \\ \hline
$Z_3$ & 9 & 7 & 0 & 10 & 8 & 10 & 8 & 6 \\ \hline
$Z_4$ & 7 & 7 & 10 & 0 & 8 & 0 & 8 & 10 \\ \hline
$Z_5$ & 7 & 1 & 8 & 8 & 0 & 8 & 0 & 10 \\ \hline
$Z_6$ & 7 & 7 & 10 & 0 & 8 & 0 & 8 & 10 \\ \hline
$Z_7$ & 7 & 1 & 8 & 8 & 0 & 8 & 0 & 10 \\ \hline
$Z_8$ & 3 & 9 & 6 & 10 & 10 & 10 & 10 & 0 \\ \hline
\end{tabular}

}
\subcaptionbox{Linkage matrix\label{fig:linkage-matrix}}{
\hspace*{0.2cm}
\begin{tabular}{|c|c|c|}
\hline
$Z_5$ & $Z_7$ & 0 \\ \hline
$Z_4$ & $Z_6$ & 0 \\ \hline
$Z_2$ & $Z_5 \sqcup Z_7$ & 1 \\ \hline
$Z_1$ & $Z_8$ & 3 \\ \hline
$Z_3$ & $Z_4 \sqcup Z_6$ & 10 \\ \hline
\end{tabular}
\vspace*{1cm}
\hspace*{0.2cm}
}
\subcaptionbox{Corresponding dendrogram\label{fig:hierarchical-clustering}}{
\includegraphics[width=0.65\linewidth]{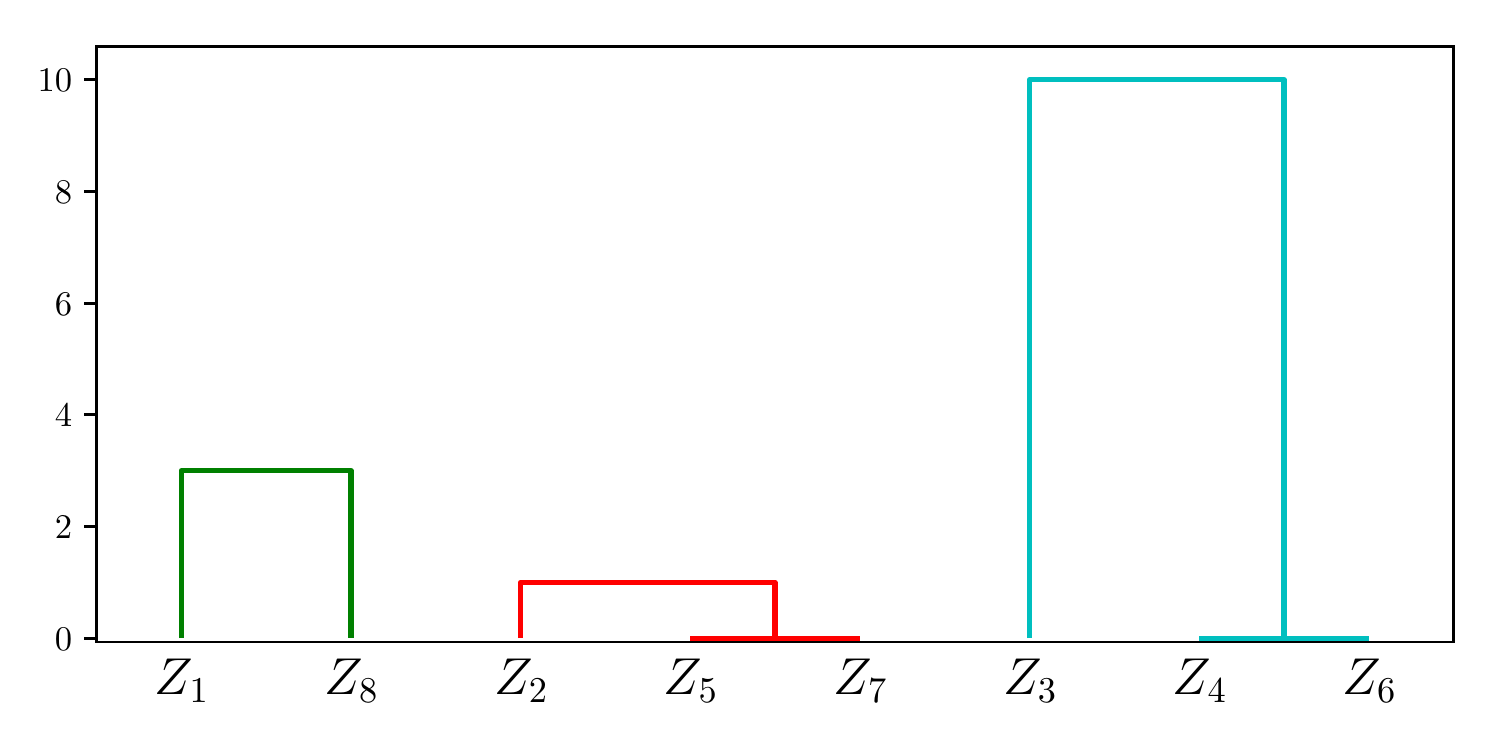}
}
\caption{Learning union of k-test vectors}
\label{fig:example}
\end{figure}

\end{example}

\begin{proposition} 
The time complexity of learning union of k-test vectors using Algorithm \ref{algo:linkage-ktest-vectors} is in $\mathcal{O}(mk \cdot |x| + 2^{n^k} \cdot m^2 + (n+k) \cdot 2^{n^k} \cdot m^2 )$.

\begin{proof}
   Suppose alphabet $\Sigma$ contains $n$ elements and that the sample contains $m$ elements. Let $x$ be the longest word in the sample. The learning of unions of k-test vectors is done in 3 steps:
    \begin{itemize}
        \item Computing the initial $k$-test vectors, which is in $O(mk \cdot |x|)$ with x the longest word in the sample.
        \item Building the distance matrix is in $O(2^{n^k} \cdot m^2)$.
        \item Building the hierarchical clustering (linkage of neighbouring k-test vectors) is in $O((n+k) \cdot 2^{n^k} \cdot m^2)$.
    \end{itemize}
\end{proof}

\end{proposition}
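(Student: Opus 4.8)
The plan is to split Algorithm~\ref{algo:linkage-ktest-vectors} into its three successive phases --- constructing the sample $S$ of $k$-test vectors, filling in the distance matrix $D$, and running the agglomerative loop --- bound the cost of each phase, and add the three bounds. For the first phase, $\alpha_k(\{x\})$ is computed for a single word $x$ by sliding a window of width $k$ over $x$: its prefix and suffix of length $k-1$ are read off directly and its $\max(0,|x|-k+1)$ segments of length $k$ are produced one window position at a time at $O(k)$ cost each, so $\alpha_k(\{x\})$ costs $O(k\cdot|x|)$; taking $x$ to be the longest word of $\mathcal{S}$ and summing over the $m$ words gives $O(mk\cdot|x|)$. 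For the second phase, $D$ has $m^2$ entries, each equal by Definition~\ref{def-distance} to $d(Z,Z')=|Z\bigtriangleup Z'|$, which is obtained by a single pass over the four component sets of $Z$ and $Z'$ followed by a count. Since the components of a $k$-test vector are subsets of $\Sigma^{k-1}$, $\Sigma^{k}$ and $\Sigma^{<k}$, its cardinality is $O(n^{k})$, and in particular no larger than $2^{n^{k}}$; hence each distance costs $O(2^{n^{k}})$ and the whole matrix $O(2^{n^{k}}\cdot m^{2})$. This per-distance bound carries over to the third phase, since a union $a\sqcup b$ is again a $k$-test vector and obeys the same cardinality bound.

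For the third phase we implement the loop with the nearest-neighbour chain scheme of \cite{benzecri1982construction}. There are at most $m-1$ merges, and the standard amortised analysis of that scheme shows that $O(m)$ nearest-neighbour queries are performed over the whole run, each answered by an $O(m)$-time scan of a row of the precomputed matrix $D$, for $O(m^2)$ steps in total. Each merge additionally costs (i) a compatibility test $\gamma_k(a\sqcup b)=\gamma_k(a)\cup\gamma_k(b)$, which by Proposition~\ref{union prop} and the remark following its proof runs in time $O((n+k)\cdot|Z\cup Z'|)=O((n+k)\cdot 2^{n^{k}})$, and (ii) the update of one row/column of $D$, i.e.\ $O(m)$ recomputed distances of cost $O(2^{n^{k}})$ each. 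Multiplying the $O(m^2)$ merge-level operations by the cost $O((n+k)\cdot 2^{n^{k}})$ of the dominant one gives $O((n+k)\cdot 2^{n^{k}}\cdot m^{2})$ for the clustering phase; adding the three phases yields $\mathcal{O}(mk\cdot|x| + 2^{n^{k}}\cdot m^{2} + (n+k)\cdot 2^{n^{k}}\cdot m^{2})$, as claimed.

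The main obstacle is the accounting in the third phase: one must argue that imposing the side condition ``merge the closest pair only when $\gamma_k(a\sqcup b)=\gamma_k(a)\cup\gamma_k(b)$'' does not inflate the operation count beyond the $O(m^2)$ of the plain nearest-neighbour chain. In particular the existential test in the \textbf{while} guard should not be re-evaluated from scratch at each iteration: it can be maintained incrementally, because after a merge the only pairs whose compatibility status can change are those involving the freshly created vector $a\sqcup b$, so $O(m)$ new tests per merge suffice. One also has to verify that the loop invariant on the array $chain$ (that $a=chain[-3]$ holds on entry to the inner \textbf{repeat} loop) still supports the usual reducibility argument, so that each merge is charged only $O(1)$ amortised chain operations even though a pair may be rejected by the compatibility test and not appended. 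A minor orthogonal point is that the quantity really controlling the per-operation cost is the maximum cardinality of a $k$-test vector arising during the run, which is $O(n^{k})$ rather than $2^{n^{k}}$; using $n^{k}$ throughout sharpens the statement, but the coarser bound already suffices.
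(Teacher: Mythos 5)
Your proposal follows exactly the paper's decomposition into the same three phases (computing the initial $k$-test vectors, building the distance matrix, and running the nearest-neighbour-chain clustering) and arrives at the same three-term bound; the paper's own proof merely asserts these three costs, so your version is the same argument with the justifications filled in. Your closing remarks --- that the per-distance cost is really governed by the $O(n^k)$ cardinality of a $k$-test vector rather than $2^{n^k}$, and that the compatibility side condition needs care in the nearest-neighbour-chain accounting --- are sound observations that the paper does not address, but they do not change the claimed bound.
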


\section{Case Study}
\label{sec:experiments}

\paragraph{Job dataset}
Our case study has been inspired by an industrial problem related to the domain of cyber-physical systems. 
Recent work \cite{waqas} focused on the impact of design parameters of a flexible manufacturing system on its productivity. 
It appeared in the aforementioned study that the productivity depends on the jobs being rendered. 
To that end, the prior identification of the different job patterns is crucial in enabling engineers to optimize parameters related to the flexible manufacturing system. 
\begin{table}[!h]
\begin{center}
\caption{Sample of identified job patterns.}
\begin{tabular}{lccr}
\toprule
job & pattern   & 3-test vector & type of job \\ 
\midrule
$\ aaaaa$ 						& \multirow{3}{*}{$a^+$}  	& \multirow{3}{*}{$Z=\langle \{aa\},\{aa\},\{aaa\},\{a,aa\} \rangle$}	& \multirow{3}{*}{homogeneous} \\
\cline{1-1}
\rule{0pt}{2.5ex} $aaaaaaaaaa$ & 						 	&					&				\\
\cline{1-1}
\rule{0pt}{2.5ex} $aaaaa\ldots aaa$ &					&						&			\\
\cline{1-4}
\rule{0pt}{2.5ex} $abababab$	& \multirow{2}{*}{$(ab)^+$} & \multirow{2}{*}{$Z=\langle \{ab\},\{ab\},\{aba,bab\},\{ab\} \rangle$}	& \multirow{4}{*}{heterogeneous} 			\\
\cline{1-1}
\rule{0pt}{2.5ex} $abababababab$	&					& 			&						\\
\cline{1-3}
\rule{0pt}{2.5ex} $abcabcabc$	& \multirow{2}{*}{$(abc)^+$}& 		\multirow{2}{*}{$Z=\langle \{ab\},\{bc\},\{abc,bca,cab\},\{\} \rangle$}						&	\\
\cline{1-1}
\rule{0pt}{2.5ex} $abcabcabcabcabc$ & 					&			&						\\
\cline{1-4}
\rule{0pt}{2.5ex} $abcbcbcbca$ & $a(bc)^+a$ 				& $Z=\langle \{ab\},\{ca\},\{abc,bcb,cbc,cba\},\{\} \rangle$  & miscellaneous \\
\bottomrule
\end{tabular}
\label{table:patterns-example}
\end{center}
\end{table}

We consider a dataset containing strings, each representing a job. 
Our job patterns are also represented by 3-testable languages, the 3-test vectors of which are shown in Table \ref{table:patterns-example}.

Our dataset, implementations and complete results are available\footnote{See \href{https://gitlab.science.ru.nl/alinard/learning-union-ktss}{https://gitlab.science.ru.nl/alinard/learning-union-ktss}}.

\section{Conclusion}
\label{sec:conclusion}
In this paper, we defined a Galois connection characterizing $k$-testable languages.
We also described an efficient algorithm to learn unions of $k$-testable languages that results from this Galois connection.
From a practical perspective, we see that obtaining more than one representation is meaningful since a too generalized solution is not necessarily the best.
To avoid unnecessary generalizations, the union of two $k$-testable languages that would not be a $k$-testable language is not allowed.
Note also that depending on the applications, expert knowledge can provide an indication on the number of languages the returned union should contain.
In further work, we would like to extend the learning of unions of languages to regular languages. 
An attempt to learn pairwise disjoint regular languages has been made in \cite{linard2018learning,linard2017learning}.
However, no learnability guarantee has been provided so far.

\bibliography{biblio}

\end{document}